      \newtheorem{assumption}{Assumption}
\definecolor{darkred}{RGB}{150,0,0}
\definecolor{darkgreen}{RGB}{0,150,0}
\definecolor{darkblue}{RGB}{0,0,150}
\newcommand{\deltac}{\delta^c}
\newcommand{\maxx}{\vee}
\newcommand{\psin}[1]{\|#1\|_{\psi_2}}
\newcommand{\TcK}{\Tc_{\Kc,\x_0}}
\newcommand{\bal}{\begin{align}}
\newcommand{\eal}{\end{align}}
\DeclarePairedDelimiterX{\inp}[2]{\langle}{\rangle}{#1, #2}
\newcommand{\Id}{\mathbf{I}}
\newcommand{\wg}{\omega}
\newcommand{\ksi}{\xi}
\newcommand{\wh}{{\widehat\w}}
\newcommand{\ind}[1]{{\mathds{1}}_{\{#1\}}}
\newcommand{\Pro}{\mathbb{P}}
\theoremstyle{theorem}
\newtheorem{propo}{Proposition}[section]
\newtheorem{thm}{Theorem}[section]
\newtheorem{lem}{Lemma}[section]
\theoremstyle{remark}
\newtheorem{remark}{Remark}
\theoremstyle{definition}
\newtheorem{defn}{Definition}[section]
\DeclarePairedDelimiter\floor{\lfloor}{\rfloor}
\newcommand{\sign}{\mathrm{sign}}
\newcommand{\Exp}{\mathbb{E}}               
\newcommand{\E}{\mathbb{E}}                    
\newcommand{\la}{{\lambda}}                     
\newcommand{\nn}{\notag}
\newcommand{\R}{\mathbb{R}}
\newcommand{\x}{\mathbf{x}}
\newcommand{\w}{\mathbf{w}}
\newcommand{\ub}{\mathbf{u}}
\newcommand{\g}{\mathbf{g}}
\newcommand{\vb}{\mathbf{v}}
\newcommand{\e}{\mathbf{e}}
\newcommand{\z}{\mathbf{z}}
\newcommand{\ab}{\mathbf{a}}
\newcommand{\h}{\mathbf{h}}
\newcommand{\Tc}{{\mathcal{T}}}
\newcommand{\Sc}{{\mathcal{S}}}
\newcommand{\Dc}{\mathcal{D}}
\newcommand{\Kc}{\mathcal{K}}
\newcommand{\Nc}{\mathcal{N}}
\newcommand{\Lc}{\mathcal{L}}
\newcommand{\Qc}{\mathcal{Q}}
\newcommand{\Oc}{\mathcal{O}}
\newcommand{\beq}{\begin{equation}}
\newcommand{\eeq}{\end{equation}}
\newcommand{\bea}{\begin{align}}
\newcommand{\eea}{\end{align}}
\newcommand{\vp}{\vspace{4pt}}
\def\bea#1\eea{\begin{align}#1\end{align}}
\title{
The Generalized Lasso for Sub-gaussian Measurements with Dithered Quantization}
\author{
\IEEEauthorblockN{Christos Thrampoulidis and Ankit Singh Rawat}
\thanks{C.~Thrampoulidis and A.~S.~Rawat are with the Research Laboratory of Electronics, MIT, Cambridge, MA 02139, USA (e-mail: \{cthrampo, asrawat\}@mit.edu).}

{Research Laboratory of Electronics, MIT\\
Cambridge, MA 02139, USA.\\
E-mail: cthrampo@mit.edu, asrawat@mit.edu}
}
\begin{document}
\maketitle
\begin{abstract}
In the problem of structured signal recovery from high-dimensional linear observations, it is commonly assumed that full-precision measurements are available. Under this assumption, the recovery performance of the popular Generalized Lasso (G-Lasso) is by now well-established. In this paper, we extend these types of results to the practically relevant settings with quantized measurements. We study two extremes of the quantization schemes, namely, uniform and one-bit quantization; the former imposes no limit on the number of quantization bits, while the second only allows for one bit. In the presence of a uniform dithering signal and when measurement vectors are sub-gaussian, we show that the same algorithm (i.e., the G-Lasso) has favorable recovery guarantees for both uniform and one-bit quantization schemes. Our theoretical results, shed light on the appropriate choice of the range of values of the dithering signal and accurately capture the error dependence on the problem parameters. For example, our error analysis shows that the G-Lasso with one-bit uniformly dithered measurements leads to only a logarithmic rate loss compared to the full-precision measurements.
\end{abstract}

\section{Introduction}

\subsection{Motivation}

Over the last decade or so, the problem of structured signal recovery from high-dimensional linear measurements has received considerable attention. In its most classical formulation the problem asks to recover a signal $\x_0\in\R^n$ from (noisy) linear measurements $y_i=\ab_i^T\x_0 + z_i$, $i \in [m] :=\{1, 2,\ldots, m\}$, where $\ab_i,~i\in[m]$, are the measurement vectors and $z_i$ represents a noise component. In the high-dimensional setting of interest both the dimension of the signal's ambient space $n$, as well as the number of measurements $m$ are large. Moreover, there is typically some prior structural knowledge about the unknown signal $\x_0$, which manifests itself in many forms such as sparsity, low-rankness, sparse derivatives, etc.

While many recovery algorithms have been proposed and analyzed in the literature, perhaps
the most popular one is the Generalized Lasso (G-Lasso), which minimizes a least-squares objective function subject to a regularization constraint that promotes the prior structural knowledge on $\x_0$ (e.g., $\ell_1$-norm for sparse recovery and nuclear norm for low-rank matrix recovery). In its general form the G-Lasso obtains an estimate $\widehat{\x}$ by solving the following (convex) optimization program for some appropriate (convex) constraint set $\Kc$ \footnote{The discussion that follows, as well as, all the results presented in this paper do \emph{not} require the set $\Kc$ to be convex. However, convex constraint sets $\Kc$ make \eqref{eq:Lasso} a convex optimization program; thus, they are often preferred in practice for computational purposes.}.
\bea\label{eq:Lasso}
\widehat{\x} := \arg\min_{\x\in\Kc} \frac{1}{2m}\sum_{i=1}^{m} (y_i - \ab_i^T\x_0)^2.
\eea
By now, there is a very good understanding of the algorithm's recovery performance; the guarantees apply for general types of structure (and the corresponding constraint sets $\Kc$) and hold under wide range of assumptions on the measurement vectors. It is rather typical that the analysis is performed under the assumption that the measurement vectors are realized from some probability distribution. For example, suppose that the measurements are centered sub-gaussian and that the noise is sub-gaussian and independent of the measurements, with entries iid of variance $\sigma^2$. Then, with high-probability, it holds\footnote{Throughout the introduction, we are being deliberately  informal in order to streamline the presentation and focus on the main points: the symbol $``\lesssim"$ hides positive constants and ``high-probability" is not quantified. We refer the reader to the remaining sections of the paper for formal statements.}, \cite{Cha,TroppBowling}:
\bea\label{eq:LASSO_error}
\| \widehat{\x} - \x_0 \|_2 \lesssim \sigma\cdot\frac{\wg(\TcK)}{\sqrt{m}},
\eea
provided that $m\gtrsim  \wg^2(\TcK)$. In \eqref{eq:LASSO_error} the quantity $\wg(\TcK)$ is a geometric summary parameter, called the Gaussian width, which captures the geometry of the set $\Kc$ with respect to the particular $\x_0$. While deferring its formal definition to Section \ref{sec:back}, it is important for our discussion to remark the following. First, despite being an abstract parameter, it is often possible to compute sufficiently accurate approximations that reveal the explicit role of primitive problem parameters. For example, for an $s$-sparse $\x_0$ and $\Kc$ being a scaled $\ell_1$-ball, it can be shown that $\wg^2(\TcK)\leq 2s\log(n/s)+\frac{3}{2}s$. Second, the result stated above not only captures the correct error rate decay, but also, it captures the minimum required number of measurements to guarantee this decay, i.e., $m\gtrsim  \wg^2(\TcK)$.

However, in many practical settings, full-precision measurements that have been assumed thus far in this discussion are \emph{not} available. Instead, one observes \emph{quantized linear measurements} $y_i = \Qc(\ab_i^T\x_0+\z)$, where $\Qc(\cdot)$ is the quantization scheme at hand. For example, consider the following two commonly encountered quantization schemes (cf.~Figure \ref{fig:quant}).

\begin{itemize}
\item \emph{Uniform (mid-riser) quantization}: For some level $\Delta>0$,  $\Qc(x) = \Delta\big(\floor{\frac{x}{\Delta}}+\frac{1}{2}\big).$ \footnote{For $b \in \R$, $\floor{b}$ denotes the largest integer that is smaller than $b$.}

\item \emph{One-bit quantization}: $\Qc(x) = \sign(x).$
\end{itemize}

\noindent This gives rise to the following natural question:

{\emph{How to recover a structured signal $\x_0$ from high-dimensional measurements $y_i = \Qc(\ab_i^T\x_0+z_i)$? Moreover, can we obtain recovery guarantees that resemble \eqref{eq:LASSO_error}?}}

\begin{figure}[t]
\begin{minipage}[b]{.48\linewidth}
  \centering
  \centerline{\includegraphics[width = .85\linewidth]{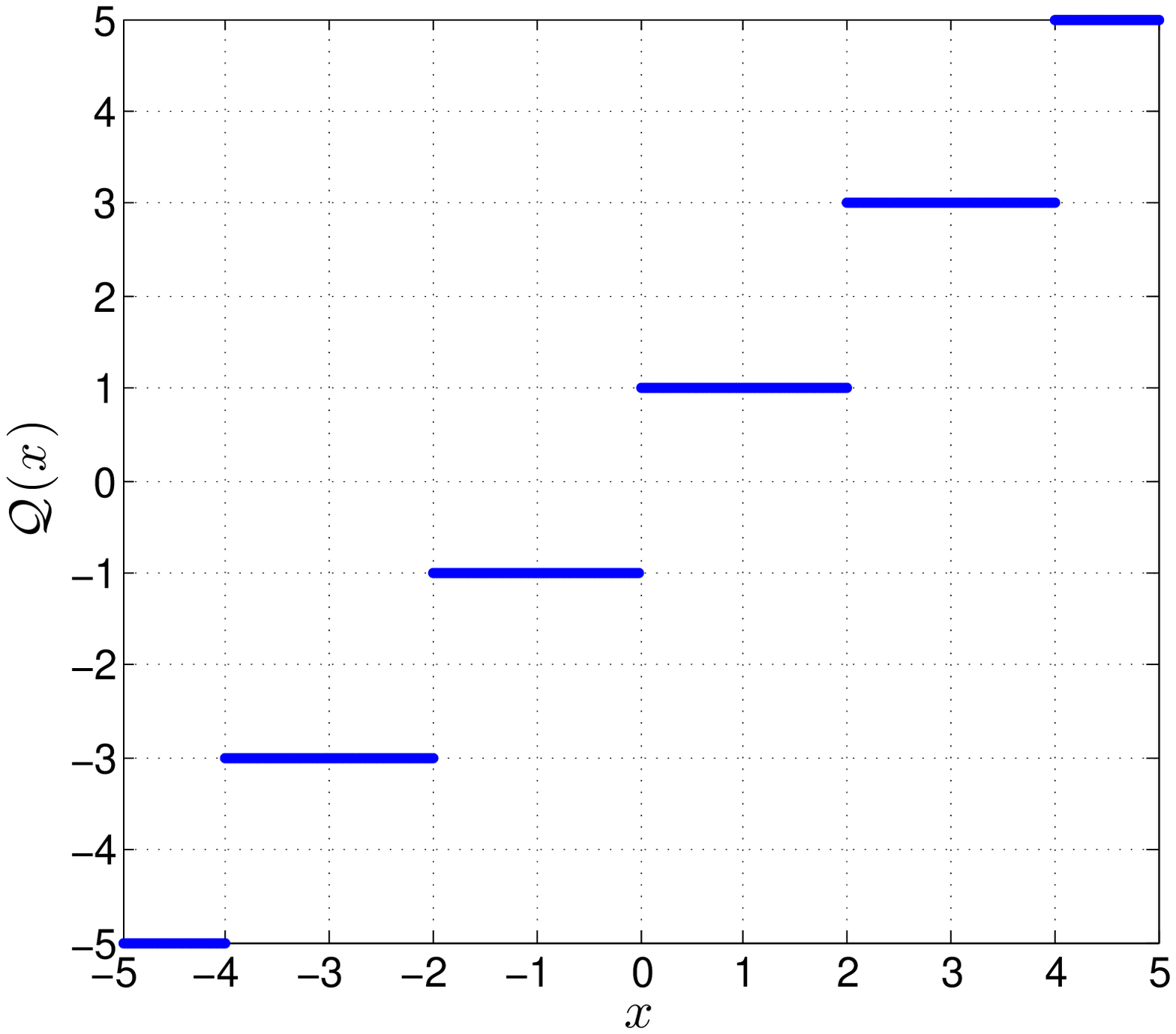}}
  \centerline{\footnotesize \qquad Uniform quantization with $\Delta = 2$.}\medskip
\end{minipage}
\begin{minipage}[b]{0.48\linewidth}
  \centering
  \centerline{\includegraphics[width = .85\linewidth]{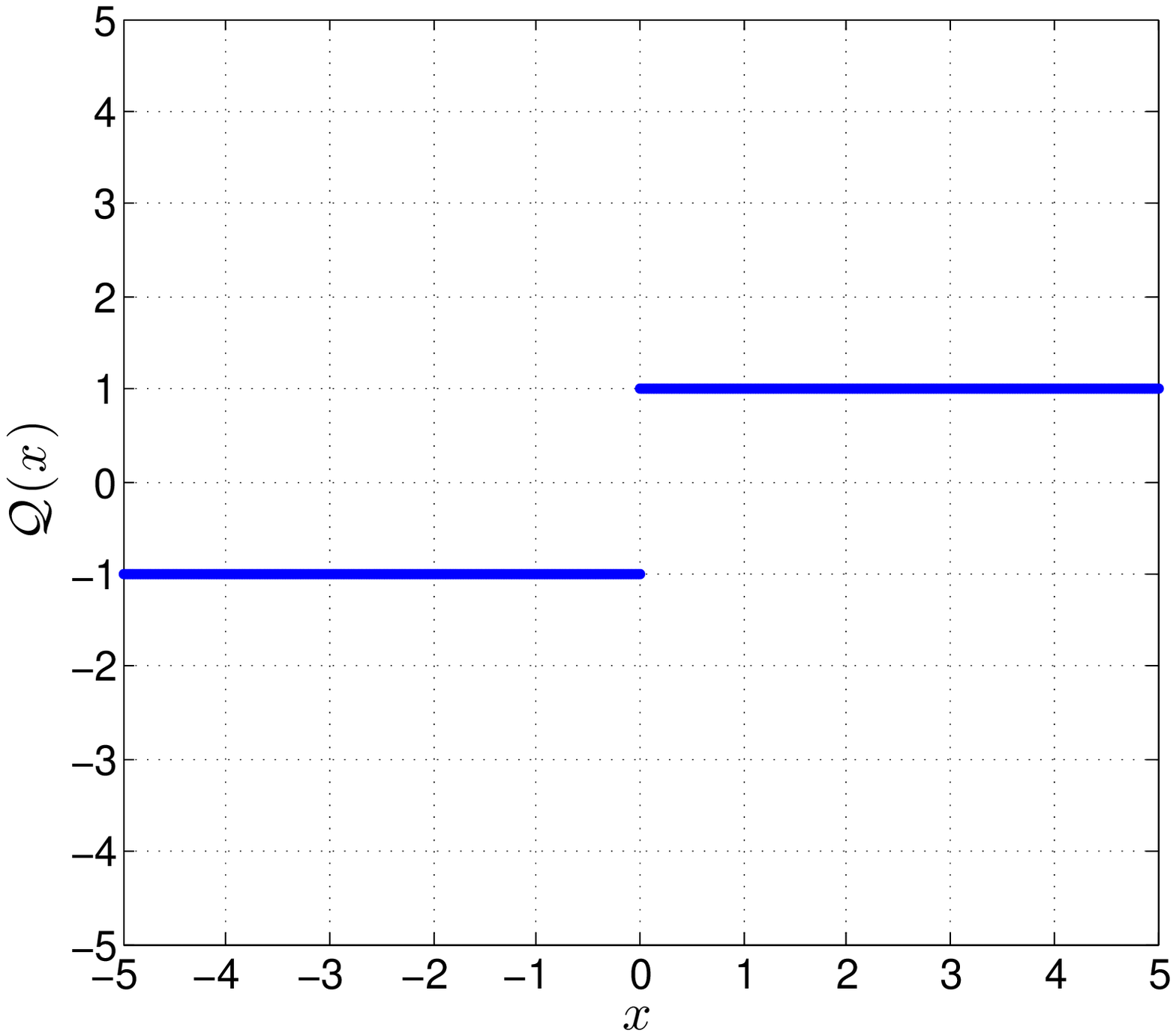}}
  \centerline{\footnotesize \qquad One-bit quantization.}\medskip
\end{minipage}
\caption{Illustration of uniform and one-bit quantization.}
\label{fig:quant}
\end{figure}

First of all, notice that the task above is naturally harder than recovery from full-precision linear measurements, as the quantized measurements are clearly less informative. A simple illustrative example is to assume that $m>n$ and the noiseless setting ($z_i=0,~i\in[m]$). Then, for the full-precision measurements, under mild assumptions, it is easy to perfectly recover $\x_0$ by just inverting the system of linear equations. On the other hand, recovery becomes challenging when only one-bit measurements $\sign(\ab_i^T\x_0)$ are available; even, in the absence of noise we may not hope to perfectly recover $\x_0$. Onwards, we focus only on the noiseless setting since it is already challenging for quantized measurements and it also leads to easier exposition.
%

Perhaps surprisingly, Plan and Versyynin \cite{plan2016generalized} demonstrated that, even with quantized measurements, the G-Lasso achieves good recovery performance when the measurement vectors are Gaussian\footnote{The question of structured signal recovery from quantized measurements (specifically, one-bit measurements) has been subject of numerous works over the past decade. We postpone a reference on this line of research  to Section \ref{sec:RW}, and instead, we focus  on the directly relevant work \cite{plan2016generalized}.}. An appealing feature of their theoretical result is that, similar to \eqref{eq:LASSO_error}, their error bounds are simple to state and clearly isolate the effect of the specific quantization scheme, on one hand, and of the problem geometry, on the other hand. To be more concrete the authors \cite{plan2016generalized} show that, with high probability, the estimator
\bea\label{eq:muLASSO}
\widehat{\x} := \arg\min_{\x\in\Kc} \frac{1}{2m}\sum_{i=1}^{m} (\mu_\Qc\cdot y_i - \ab_i^T\x_0)^2,
\eea
satisfies
\bea\label{eq:muLASSO_error}
\big\| \mu_\Qc\cdot\widehat{\x} - \frac{\x_0}{\|\x_0\|_2} \big\|_2 \lesssim \frac{\sigma_\Qc\cdot\wg(\TcK)+\eta_\Qc}{\sqrt{m}},\eea
provided that $m\gtrsim \wg^2(\TcK)$. In this expression, the non-zero parameters $\mu_\Qc, \sigma_\Qc$ and $\eta_\Qc$ depend on the specific quantization function $\Qc$. For example,  it can be shown that $\mu_\Qc=\sqrt{2/\pi}$, $\sigma_\Qc^2=1-2/\pi$ for one-bit measurements, and, $\mu_\Qc=1$, $\sigma_\Qc^2\leq \Delta^2$ for uniformly quantized measurements. For simplicity, we drop the subscript $\Qc$ when the specific scheme in reference is clear from context. 
We now make the following two crucial remarks regarding \eqref{eq:muLASSO_error}.
\begin{itemize}
\item \emph{Only-direction estimation}: It only guarantees that $\widehat{\x}$ is well-aligned with $\x_0$ and says nothing about its norm being close to that of $\x_0$ (compare the left-hand side of \eqref{eq:muLASSO_error} to the corresponding result regarding full-precision measurements in \eqref{eq:LASSO_error}).

\item \emph{Gaussian measurements}: The validity of \eqref{eq:muLASSO_error} requires that the measurement vectors are Gausssian. 
\end{itemize}

\noindent In this work, we address both of the aforementioned weaknesses. We show that in the presence of appropriate \emph{dithering} in the quantization scheme, the same algorithm \eqref{eq:muLASSO} can recover not only direction but also norm information for distributions of the measurement vectors beyond Gaussians. Our results apply to both of the quantization schemes mentioned above, and they also inherit many of the  interpretability features of \eqref{eq:LASSO_error} and \eqref{eq:muLASSO_error}.

\subsection{Contribution}

We consider quantized linear measurements with appropriate \emph{dithering}. Dither is a purposeful applied random noise component that is added to an input signal prior to its quantization. This technique is rather well-established and commonly used both in practice (because it can result in more subjectively pleasing reconstructions) and in theory (because it often results in favorable statistical properties of the quantization noise); e.g., see \cite{gray1993dithered,dabeer2006signal} and references therein. More recently, dithered quantization has been also exploited and studied in the context of high-dimensional structured signal recovery from quantized linear measurements \cite{baraniuk2017exponential,knudson2016one,xu2018taking,dirksen2018robust}. Our work builds upon such recent results, in particular \cite{xu2018taking,dirksen2018robust}; see Section \ref{sec:RW} for a detailed discussion.

In a nutshell, we show that the G-Lasso  can be used to efficiently recover structured signals from (appropriately) dithered quantized linear measurements. More precisely, we study the recovery method in \eqref{eq:muLASSO} for appropriate value of the parameter $\mu_Q$ when the measurements are of the form $y_i=\Qc(\ab_i^T\x_0+\tau_i)$, where $\tau_i$ is the dithering signal and $\Qc$ is either the uniform or the one-bit quantizer. We consider sub-gaussian measurement vectors of sub-gaussian norm at most $L$ (see~Section \ref{sec:back}).

Out results are rather easy to state. We include here an informal version to allow direct comparisons to \eqref{eq:LASSO_error} and \eqref{eq:muLASSO_error}. Formal statements and detailed discussions follow in later sections.

\paragraph{Uniform dithered quantization} For $i \in [m]$, let  the measurements $y_i$ be given as follows:
\bea\label{eq:uni_model}
y_i =  \Delta\Big(\Big\lfloor{\frac{\ab_i^T\x_0 + \tau_i}{\Delta}}\Big\rfloor+\frac{1}{2}\Big),
\eea
where $\tau_i\sim\mathrm{Unif}\big(-\frac{\Delta}{2},\frac{\Delta}{2}\big]$ and $\Delta>0$. We solve the G-Lasso in \eqref{eq:muLASSO} setting $\mu_\Qc=1$ and we show that,
\bea\label{eq:error_uni}
\| \widehat{\x} - \x_0 \|_2 \lesssim \Delta \cdot \frac{\wg(\TcK)}{\sqrt{m}}, 
\eea
provided that $m\gtrsim \wg^2(\TcK)$. Note the resemblance of this result to \eqref{eq:LASSO_error}. We may conclude that essentially the G-Lasso treats the quantization error (up to an absolute constant) as an independent noise component of strength $\Delta$ \footnote{It is rather straightforward to see that the quantization error is a random variable that is absolutely bounded by $\Delta$ under the quantization scheme in \eqref{eq:error_uni}. On the other hand, this random variable is \emph{not} independent of the measurements. Thus, \eqref{eq:LASSO_error} is not applicable and showing that \eqref{eq:error_uni} holds requires additional effort}. Also, observe that unlike \eqref{eq:muLASSO_error} our guarantee does not require knowledge of the norm of the signal $\|\x_0\|_2$.

\paragraph{One-bit dithered quantization.} For $i \in [m]$, let  the measurements $y_i$ be given as follows:
\bea\label{eq:one_model}
y_i = \sign(\ab_i^T\x_0 + \tau_i),
\eea
where $\tau_i\sim\mathrm{Unif}[-T,T]$  and  $T>0$.
Furthermore, assume a known upper bound on the true norm of signal $\x_0$, i.e., known $R>0$ such that $\|\x_0\|_2\leq R$.
 Our main result suggests setting $T=c \,L\, R \, \sqrt{\log{m}}$ for some absolute constant $c$ and solving  \eqref{eq:muLASSO} while setting $\mu_\Qc=T$. Then, we show that
\bea\label{eq:error_one}
\| \widehat{\x} - \x_0 \|_2 \lesssim LR\,\Big(\sqrt{\log m} \,\maxx \sqrt{\log n}\,\Big)\cdot\frac{\wg(\TcK) + C_2}{\sqrt{m}}, 
\eea
provided that $m\gtrsim \wg^2(\TcK)$.  Here, $a\maxx b := \max\{a,b\}.$
Observe that despite only having one-bit information, the proposed estimator leads to only a logarithmic loss with respect to the best possible error rate  in \eqref{eq:LASSO_error}. 

In the coming sections we also discuss (possible) extensions of this theory to other dithering distributions and to other random measurement models. We also corroborate our theoretical findings with numerical simulations.

%



\subsection{Proof sketch}
\label{sec:sketch}

\paragraph{A simple key inequality} We start by denoting the loss function in \eqref{eq:muLASSO} as
$$
\Lc(\x) := \frac{1}{2m}\sum_{i=1}^{m}{(\mu_\Qc y_i-\ab_i^T\x_0)^2}.
$$
Let $\widehat{\x}$ be a solution of \eqref{eq:muLASSO} and $\widehat{\w} = \widehat{\x} - \x_0$ denote the error vector. If $\hat\w=0$ there is nothing to prove and so we assume onwards that this is $\emph{not}$ the case. By optimality of $\widehat{\x}$ (consequently, of ${\x_0} + \widehat{\w}$), we have that
\bea\nn
0 &\geq \Lc(\x_0+\wh) - \Lc(\x_0) \nonumber\\
 & = \frac{1}{2m}\sum_{i = 1}^{m} \big(\ab_i^T\widehat{\w}\big)^2 - \frac{1}{m} \sum_{i = 1}^{m}\big(\mu_\Qc y_i - \ab_i^T\x_0\big)\ab_i^T\widehat{\w},
\eea
where the equality holds by simple algebraic manipulations. Rearranging that expression yields
\bea \nn
\frac{1}{m}\sum_{i = 1}^{m} \big(\ab_i^T\widehat{\w}\big)^2 \leq \frac{2}{m} \sum_{i = 1}^{m}\big(\mu_\Qc y_i - \ab_i^T\x_0\big)\ab_i^T\widehat{\w},  
\eea
which is our starting point to obtain an upper bound on $\|\wh\|_2$. Recall that $\widehat{\w}\in\Kc-\x_0$, which leads to ${\widehat{\w}}/{\|\widehat{\w}\|_2}\in\mathrm{cone}(\Kc-\x_0)=:\Dc(\Kc,\x_0)$, where $\Dc(\Kc,\x_0)$ denotes the cone of descent directions or {\em tangent cone} (cf.~Definition \ref{def:tangent}). Therefore, we can deduce the following key fact:
\bea \label{eq:key}
&\|\wh\|_2\cdot \underbrace{\inf_{\w\in \Sc^{n-1}\cap\Dc(\Kc,\x_0)}\frac{1}{m}\sum_{i = 1}^{m} \big(\ab_i^T{\w}\big)^2}_{:=\mathrm{LB}} \leq \nonumber \\
&~~~~~~2\cdot \underbrace{\sup_{\w\in \Sc^{n-1}\cap\Dc(\Kc,\x_0)}\frac{1}{m} \sum_{i = 1}^{m}\big(\mu_\Qc y_i - \ab_i^T\x_0\big)\ab_i^T\w}_{:=\mathrm{UB}},  
\eea
where $\Sc^{n-1}$ denotes the unit sphere in $\R^n$. The rest of the proof amounts to lower (upper) bounding the $\inf$ $(\sup)$ of the involved random processes in \eqref{eq:key}. Applying those bounds in \eqref{eq:key} naturally leads to the results stated in \eqref{eq:error_uni} and \eqref{eq:error_one}.


\paragraph{Lower bound.} The desired lower bound on $\mathrm{LB}$ follows directly by applying Mendelson's small ball method \cite{mendelson1,mendelson2,TroppBowling}. While deferring the details to later sections, we mention that requiring that $\mathrm{LB} >c>0$ results in the ``provided that $m\gtrsim \wg^2(\TcK)$" part of the results.

\paragraph{Upper bound} We will show how to upper bound $\Exp[\mathrm{UB}]$ in \eqref{eq:key}, where the expectation is over all the involved random variables, i.e., the $\ab_i$'s and $\tau_i$'s. This leads to a constant probability bound (say with probability at least $0.995$) by Markov inequality, but more powerful techniques can also be applied to yield similar bounds that hold up to probability of failure that goes to zero with increasing number of measurements. For the ease of exposition, we denote the \emph{quantization noise} as follows:
\bea\label{eq:q_error}
e_i := \mu_{\Qc} y_i - \ab_i^T\x_0 = \mu_{\Qc}\cdot\Qc(\ab_i^T\x_0+\tau_i) - \ab_i^T\x_0.
\eea 
In the sequel, keep in mind that $e_i$ is \emph{not} independent of the measurement vectors $\ab_i$.
Also, for a random process $X_{\w}$ indexed by $\w$, let us denote 
\bea\label{eq:norm_def}
\| X_\w \| := \sup_{\w\in \Sc^{n-1}\cap\Dc(\Kc,\x_0)} X_\w.
\eea

\noindent By observing that $\| X_\w + X'_\w \|\leq \| X_\w\| +  \| X'_\w\|$, it easily follows that
\bea
&\Exp[\mathrm{UB}] \leq \Exp\big\| \frac{1}{m} \sum_{i=1}^{m} (e_i\mathbf{\ab_i}^T\x_0  - \Exp[ \tilde{e}_i{\tilde{\ab}_i}^T\w ] )\big\| + \nonumber \\
&~~~~~~~~~~~~\big\|\frac{1}{m}\sum_{i=1}^{m}\Exp[ \tilde{e}_i\mathbf{\tilde{\ab}_i}^T\x_0 ] \big\| \nn
\\ 
&=  \underbrace{\Exp\big\| \frac{1}{m} \sum_{i=1}^{m} (e_i\mathbf{\ab_i}^T\x_0  - \Exp[ \tilde{e}_i{\tilde{\ab}_i}^T\w ] )\big\|}_{\mathrm{Term~I}} + \underbrace{\big\| \Exp[ {e}_1{\ab}_1^T\w ] \big \|}_{\mathrm{Term~II}},\label{eq:Terms}
\eea
where, for each $i \in [m]$,  $\tilde{e}_i$ and $\tilde{\ab}_i$ are iid copies of $e_1$ and $\ab_1$, respectively. Now, we need to show that both {\rm Term~I} and {\rm Term II} are small. We may think of these as a \emph{bias} ({\rm Term II}) and \emph{variance} ({\rm Term I}) terms. Appropriately selecting the dithering signal helps reduce the bias in the estimate. At the same time, since the dither signal is not known, it acts as a source of noise and naturally increases the variance of the estimate.

Here, in oder to keep this proof sketch short, we focus on Term II. This alone already demonstrates the value of dithering and guides the correct choice of the parameter $\mu_\Qc$.
%
%
 The details regarding bounding Term I can be found in later sections.
  We consider Term II separately for each one of the quantization schemes that we wish to analyze.

\noindent{\emph{Uniform quantization.}} The key observation here is that for all inputs $x\in\R$ the quantization noise of a uniform quantizer with uniform dithering (see \eqref{eq:uni_model}) is a mean zero random variable, i.e.,
\bea\label{eq:mean_uni}
\E_{\tau} [Q(x+\tau) - x] = 0.
\eea
This is a classical fact in the theory of dithered quantization (see Section \ref{sec:uni} for a discussion) \cite{gray1993dithered}. By using this fact and the tower property of expectation, one easily finds that {\rm Term II} is equal to zero:
\bea\label{eq:uni_zero}
\E[e_1(\ab_1^T\w)] =\E_{\ab_1}\big[\E_\tau[e_1](\ab_1^T\w)\big] = 0.
\eea

\noindent{\emph{One-bit quantization.}} As compared to the uniform quantization, the quantization noise in the case of $1$-bit quantization is not zero-mean. 
However, with an appropriate choice of $T$ we can still make {\rm Term II} small enough. A simple calculation yields the following for the quantization scheme in \eqref{eq:one_model}:
\bea\nn
\E_{\tau} [Q(x+\tau)] = \frac{x}{T} - \frac{x}{T}\ind{|x|>T} + \ind{x>T} - \ind{x<-T}.
\eea
Hence, by choosing $\mu_\Qc=T$, we have that 
\bea\label{eq:mean_one}
\E_{\tau} [\mu_\Qc Q(x+\tau) - x] &= -x\ind{|x|>T} + T\ind{x>T} -  \nonumber \\
&~~~T\ind{x<-T}.
\eea
In \eqref{eq:Terms} the role of $x$ above is played by $\ab_1^T\x_0$. Clearly the right-hand side in \eqref{eq:mean_one} is non-zero for general values of $x$, but we can hope of making it small by choosing $T$ large enough so that the events under which the indicator functions become active are rare. To see this, recall our assumption that $\ab_1$ is isotropic $L$-subgaussian, from which it follows that $\Pr(|\ab_i^T\x_0|>t)\leq 2\exp(-ct^2/(L^2\|\x_0\|^2)).$ Notice that this probability can be made sufficiently small by setting $T=c L R \sqrt{\log{m}}$. Of course, a little more work is needed to translate this into $\E[e_1(\ab_1^T\w)]$ being small, but at this point we defer the rest of the details to later sections (see Lemma \ref{lem:upper_two}).
%

\begin{remark}[Literature]
The method for analyzing the G-Lasso performance based on \eqref{eq:key} has been commonly used in several recent works. In fact, this is the starting point not only for the analysis under dithered quantized measurement, but also for the error bounds in \eqref{eq:LASSO_error} and \eqref{eq:muLASSO_error}, e.g. \cite{plan2016generalized}. Beyond that, as previously mentioned, the lower bound is based on Mendelson's small-ball method \cite{mendelson1,mendelson2}. For the upper bound: (Term I) we carefully put together several known techniques in the study of suprema of random processes (such as symmetrization, Rademacher contraction principle, majorizing measure theorem, etc.; see Lemmas \ref{lem:upper_uni} and \ref{lem:upper_one});  (Term II) we exploit the fact that dithering causes the quantization noise to behave in a statistically nice fashion. Although this latter idea is well-known, in the context of our paper, it was brought to our attention by the  recent works \cite{dirksen2018robust,xu2018taking}. More precisely: (i) Identity \eqref{eq:mean_uni} is the key fact used in \cite{xu2018taking} (but also, see earlier classical works on dithered quantization, e.g., \cite{gray1993dithered}); (ii) Identity \eqref{eq:mean_one} is previously derived and exploited in the same way in \cite{dirksen2018robust} (but also, see earlier work \cite{dabeer2006signal})
\end{remark}

%


\subsection{Related Work}
\label{sec:RW}

Our work naturally fits in the recent developments in the study of structured signal recovery from high-dimensional random measurements. With the advent of Compressive Sensing (CS), there has been a very long list of papers that collectively have significantly advanced our understanding regarding the performance of convex-optimization based methods in the case of (full-precision) noisy linear measurements. Perhaps the most widely used and most well-studied among such methods is the Generalized Lasso in \eqref{eq:Lasso} (and its variants). By now, there exists a rich, elegant and general theory that accurately (only up to absolute constants) characterizes the recovery performance of the G-Lasso under quite general assumptions on the measurements vectors (iid Gaussians, sub-gaussians, sub-exponentials, etc.), e.g., see \cite{rudelson2006sparse,Sto,donoho2011noise,Cha,TroppEdge,StoLASSO,OTH13,TroppBowling,sivakumar2015beyond,oymak2015universality}. In this paper, we extend this line of work by establishing recovery guarantees for the G-Lasso in the practical settings with quantized measurements. The error bounds that we derive are reminiscent of the existing results in the case of the full-precision measurements.

Structured signal recovery from quantized high-dimensional measurements has also been extensively studied in the literature. The vast majority of the related works focuses on the case of one-bit quantization (often termed 1-bit CS) \emph{without} dithering, for which case norm recovery is impossible, e.g., see \cite{boufounos20081,plan2013one,jacques2013robust}. Also, most of these works, only apply to iid Gaussian measurements, with a few exceptions such as \cite{ai2014one}. On the other hand, it was recently demonstrated that dithering has the advantage of making norm-recovery possible: \cite{knudson2016one} considers iid Gaussian dithering signal, while \cite{baraniuk2017exponential} studies an adaptive dithering scheme. Both of these works are limited to iid Gaussian measurements and sparse signal recovery. It has only been very recent work due to  Xu and Jacques \cite{xu2018taking}, who (to the best of our knowledge) first demonstrated that a uniform quantization scheme combined with a uniformly distributed dithering signal promises pushing much of the theory beyond Gaussian measurements. Shortly afterwards, Dirksen and Mendelson \cite{dirksen2018robust} extended this idea to one-bit measurements with appropriate uniform dithering. These two papers have motivated our work. We show that a single recovery algorithm can successfully be used for both quantization schemes considered in \cite{xu2018taking} and \cite{dirksen2018robust}. Importantly, this algorithm is the well-established G-Lasso algorithm with a single tuning parameter $\mu_\Qc$, which changes depending on the specifics of the quantization scheme. In terms of theory, our analysis yields easily interpretable results that nicely fit in the existing literature on the full-precision measurements. Practically, the potential advantage of using the G-Lasso is that one can rely on the abundance of efficient specialized solvers for this program. We empirically observe that the G-Lasso significantly outperforms the simple recovery scheme proposed in \cite{xu2018taking} for uniform quantization. On the other hand, the G-Lasso appears to perform similarly to the algorithm proposed in \cite{dirksen2018robust}. Yet, the former has the advantage of being directly applicable to uniformly quantized measurements. Also, our analysis suggests explicit guidelines on the choice of the threshold $T$, which controls the range of dithering \eqref{eq:one_model}, and of the tuning parameter $\mu_\Qc$ in \eqref{eq:muLASSO}. 

Finally, our paper is very closely related to the work of Plan and Vershynin \cite{plan2016generalized}, who studied the G-Lasso for non-linear observations, which includes quantization as a special case. Unfortunately, their results do not guarantee norm-recovery and are limited to Gaussian measurements. Our work removes these limitations in the case of quantized measurements. 

\subsection{Organization}
The rest of the paper is organized as follows. In Section~\ref{sec:back}, we introduce various key geometric quantities that are relevant to our analysis and state the underlying assumptions on the measurement vectors. We present our main results and accompanying discussions for the uniform dithered quantization and the one-bit dithered quantization models in Section~\ref{sec:uni} and Section~\ref{sec:one}, respectively. In Section~\ref{sec:simulations}, we evaluate the recovery performance of \eqref{eq:muLASSO} using synthetic experiments and compare it with the existing methods in the literature. We conclude the paper by highlighting multiple concrete directions for future work in Section~\ref{sec:conclude}. 
We have relegated all the proof to appendices to enhance the readability of the paper.

\section{Background}\label{sec:back}

\subsection{Geometric notions}
%

First, we introduce the notions of the tangent cone and the Gaussian width.
\begin{defn}[Tangent cone]\label{def:tangent} The \emph{tangent cone} of a set $\Kc\subset\R^n$ at $\x\in\R^n$ is defined as
$$\Dc(\Kc,\x):=\{\la \vb: \la\geq0, \vb\in\Kc-\x\}.$$
\end{defn}

\begin{defn}[Gaussian width]
The {\em Gaussian width} $\omega(\Tc)$ of a set $\Tc \subset \R^n$ is defined as
\begin{equation}\label{eqn:Gauss_width}
			\omega(\Tc) := \E \big[	\sup_{\vb \in \Tc } \g^T\vb \big],	\quad \g \sim\Nc(0,\Id_n).
		\end{equation}
\end{defn}

The Gaussian width 
plays a central role in asymptotic convex geometry. In particular, its square $\wg^2(\Tc)$ can be formally described as a measure of the effective dimension of the set $\Tc$ \cite{VerBook,TroppEdge}. More recently, the Gaussian width has played a key role in the study of linear inverse problems. This is already revealed in \eqref{eq:LASSO_error} which requires that the number of measurements $m$ be larger than (a constant multiple of) the squared Gaussian width of a spherical section of the corresponding descent cone $\TcK:=\Dc(\Kc,\x_0)\cap\Sc^{n-1}$ \cite{Sto,Cha,TroppBowling}.

Importantly, this line of work has resulted in the development of principled recipes that yield useful numerically satisfactory bounds on the Gaussian width \cite{Sto,Cha,TroppEdge,OTH13}.

\subsection{Sub-gaussian vectors}
\label{sec:probability}

Throughout this paper, we work with sub-gaussian measurement vectors. For the reader's convenience, we recall the definition of sub-gaussian vector below; see e.g., \cite[Ch.~2]{VerBook} for an introduction to sub-gaussian random variables.

\begin{defn}[Sub-gaussian vectors]\label{def:sub}
A random vector $\h\in\R^n$  is called sub-gaussian with sub-gaussian norm $\psin{\h}$ if the one-dimensional marginals $\h^T\x$ are sub-gaussian random variables for all $\x\in\Sc^{n-1}$ and $$\psin{\h} = \sup_{\x\in\Sc^{n-1}}\psin{\h^T\x},$$
where we recall that the sub-gaussian norm $\psin{X}$ of a random variable $X$ is defined as follows
\bea\nn
\psin{X}:=\inf\{t>0 : \E \exp(X^2/t^2)\leq 2\}.
\eea
\end{defn}

Specifically, we make the following assumption on the measurement vectors $\ab_i,~i\in[m]$.

\begin{assumption}[Sub-gaussian measurements]\label{ass:sub}
 We assume that each vector $\ab_i,~i\in[m],$ is an iid copy of a random vector $\ab\in\R^n$ that satisfies the following properties.
\begin{itemize}
\item {\em Subgaussian marginals}: $\ab$ is a sub-gaussian random vector with $\psin{\ab}=L$.
\item {\em Symmetry}: $\ab$ has a symmetric distribution. In particular this implies $\E\ab = 0$.
\item {\em Nondegeneracy}: There exists $\alpha > 0$ such that for each $\ub \in \Sc^{n-1}$, we have $\mathbb{E}|\ab^T\ub| \geq \alpha$.
\end{itemize}
\end{assumption}

\section{Uniform quantization}\label{sec:uni}
In this section we study the problem of exact signal recovery from the observations generated by the uniform dithered quantization. In particular, we aim to recover a signal of interest $\x_0$ from $m$ measurements $y_i$ given by \eqref{eq:uni_model}.  Our estimate of $\x_0$ solves \eqref{eq:muLASSO} with parameter $\mu_\Qc = 1$. We assume that the measurement vectors satisfy Assumption \ref{ass:sub}. 
Our main result is as follows; we defer its proof to Appendix \ref{sec:proofs_uni}.

\begin{thm}\label{thm:uni_main}[Error analysis:~uniform dithered quantization]
Suppose that the vectors $\ab_i\in\R^n, i\in[m],$ satisfy Assumption \ref{ass:sub}. For a fixed vector $\x_0 \in \R^n$, let the measurements $y_i, i\in[m],$ be given as in \eqref{eq:uni_model} and let $\widehat{\x}$ be a solution to \eqref{eq:muLASSO} with parameter $\mu_{\Qc}=1$. Finally, for the constraint set $\Kc$ in \eqref{eq:muLASSO}, define the shorthand $\TcK:=\Sc^{n-1}\cap\Dc(\Kc,\x_0)$. 
Then, there exist positive constants $C,c_1=c_1(L/\alpha), c_2 = c_2(L/\alpha)$ such that the following holds with probability at least 0.99:
\bea\nn
\|\widehat{\x} - \x_0\|_2\leq C\Delta \cdot\frac{\wg(\TcK)}{\sqrt{m}},
\eea
provided that
\bea\label{eq:mgeq}
m \ge c_1 \cdot \wg^2(\TcK) + c_2.
\eea
\end{thm}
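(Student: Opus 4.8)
The plan is to drive everything through the deterministic inequality \eqref{eq:key}. Specializing it to $\mu_\Qc=1$, with error vector $\wh=\widehat\x-\x_0$ and quantization error $e_i:=y_i-\ab_i^T\x_0$, it reads
\[
\|\wh\|_2\cdot\mathrm{LB}\ \le\ 2\,\mathrm{UB},\qquad \mathrm{LB}:=\inf_{\w\in\TcK}\frac1m\sum_{i=1}^m(\ab_i^T\w)^2,\qquad \mathrm{UB}:=\sup_{\w\in\TcK}\frac1m\sum_{i=1}^m e_i\,\ab_i^T\w .
\]
Hence it suffices to establish, each on an event of probability at least $0.995$, a lower bound $\mathrm{LB}\ge c>0$ and an upper bound $\mathrm{UB}\le C'\Delta\,\wg(\TcK)/\sqrt m$; on the intersection of the two events (probability at least $0.99$) dividing gives the claim with $C=2C'/c$.

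For the lower bound on $\mathrm{LB}$ I would invoke Mendelson's small-ball method \cite{mendelson1,mendelson2,TroppBowling}, which needs two inputs. First, a uniform small-ball estimate $\inf_{\w\in\TcK}\Pr\big(|\ab^T\w|\ge 2\xi\big)\ge\rho$ for suitable $\xi,\rho>0$: this follows from the Paley--Zygmund inequality for the nonnegative variable $|\ab^T\w|$, using the Nondegeneracy bound $\E|\ab^T\w|\ge\alpha$ of Assumption \ref{ass:sub} and $\E(\ab^T\w)^2\lesssim L^2$ from sub-gaussianity, which allows one to take $\xi\asymp\alpha$ and $\rho\asymp(\alpha/L)^2$. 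Second, a bound on the Rademacher mean width, $\Exp\sup_{\w\in\TcK}\frac1{\sqrt m}\sum_i\eps_i\,\ab_i^T\w\lesssim L\,\wg(\TcK)$, valid because $\w\mapsto\sum_i\eps_i\ab_i^T\w$ is a sub-gaussian process with parameter $\lesssim L$, so Talagrand's majorizing-measure comparison controls its supremum by a constant multiple of $L\,\E\sup_{\w\in\TcK}\g^T\w=L\,\wg(\TcK)$. Feeding these into the small-ball inequality gives $\mathrm{LB}\gtrsim\xi^2\rho$ on an event of probability at least $0.995$, as soon as $m\ge c_1\wg^2(\TcK)+c_2$ with $c_1,c_2$ depending only on $L/\alpha$ — which is precisely the hypothesis \eqref{eq:mgeq}.

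For the upper bound on $\mathrm{UB}$ I would bound $\Exp[\mathrm{UB}]$ and then apply Markov's inequality. Following the split \eqref{eq:Terms}, $\Exp[\mathrm{UB}]\le \mathrm{Term~I}+\mathrm{Term~II}$, with bias $\mathrm{Term~II}=\big\|\Exp[e_1\ab_1^T\w]\big\|$ and centered part $\mathrm{Term~I}$. The dither helps in two ways. It eliminates the bias: by the classical identity \eqref{eq:mean_uni} the uniformly-dithered mid-riser quantizer satisfies $\E_\tau[e_1\mid\ab_1]=0$, so the tower rule gives $\mathrm{Term~II}=0$ exactly, cf.~\eqref{eq:uni_zero}. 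For $\mathrm{Term~I}$ I would: (a) symmetrize, introducing iid Rademacher signs $\eps_i$ independent of $\{(\ab_i,\tau_i)\}$, so that $\mathrm{Term~I}\le 2\,\Exp\sup_{\w\in\TcK}\frac1m\sum_i\eps_i e_i\,\ab_i^T\w$; (b) condition on $\{(\ab_i,\tau_i)\}$ and use the deterministic error bound $|e_i|\le\Delta/2$ of the mid-riser quantizer, so that each $s\mapsto(2e_i/\Delta)s$ is $1$-Lipschitz and vanishes at $0$, and apply the Rademacher contraction principle to peel the $e_i$ off at the price of a factor proportional to $\Delta$; (c) bound what remains, $\frac1m\Exp\sup_{\w\in\TcK}\sum_i\eps_i\ab_i^T\w$, by a constant multiple of $L\,\wg(\TcK)/\sqrt m$ using the same sub-gaussian comparison as above (together with $\eps_i\ab_i\stackrel{d}{=}\ab_i$, by symmetry of $\ab$). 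Altogether $\Exp[\mathrm{UB}]\lesssim L\Delta\,\wg(\TcK)/\sqrt m$, hence $\mathrm{UB}\lesssim L\Delta\,\wg(\TcK)/\sqrt m$ on an event of probability at least $0.995$.

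Substituting the two bounds into $\|\wh\|_2\le 2\,\mathrm{UB}/\mathrm{LB}$ gives $\|\widehat\x-\x_0\|_2\le C\Delta\,\wg(\TcK)/\sqrt m$ with $C$ depending only on $L$ and $\alpha$ (through $\xi^2\rho$). I expect the only genuinely delicate step to be $\mathrm{Term~I}$: because the quantization error $e_i$ is \emph{not} independent of the measurement vector $\ab_i$, the off-the-shelf noisy-Lasso bound in \eqref{eq:LASSO_error} cannot be applied as a black box; it is exactly the symmetrize-then-contract argument — legitimate because $|e_i|\le\Delta/2$ \emph{deterministically} — that decouples $e_i$ from $\ab_i^T\w$ inside the supremum at the price of only a factor proportional to $\Delta$, and which, together with the exact vanishing of the bias, produces the $\Delta\cdot\wg(\TcK)/\sqrt m$ rate of \eqref{eq:error_uni}.
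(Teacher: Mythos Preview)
Your proposal is correct and follows essentially the same route as the paper: the key inequality \eqref{eq:key}, Mendelson's small-ball method for $\mathrm{LB}$, and symmetrization plus Rademacher contraction plus the majorizing-measure comparison for $\mathrm{UB}$, with $\mathrm{Term~II}=0$ via \eqref{eq:mean_uni}. One small slip: $e_i=y_i-\ab_i^T\x_0=Q(\ab_i^T\x_0+\tau_i)-\ab_i^T\x_0$ includes the dither, so the deterministic bound is $|e_i|\le\Delta$ (not $\Delta/2$), which is what the paper uses in the contraction step; this only affects the absolute constant.
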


\subsection{Remarks and Extensions}
\label{rem:uni}

\paragraph{On the constant probability bound} As stated, the error bound of the theorem holds with constant success probability. However, it is possible to extend the result to hold with  success probability that goes to one as the dimension of the problem increases. This requires applying a few technical results on concentration properties of the suprema of random processes (e.g., \cite{dirksen2015tail}). Since this does not contribute to the essence of our results, we have decided to keep the exposition simple by focusing only on the constant probability bounds (similar to \cite{plan2016generalized}).

\paragraph{Limit of full-resolution measurements} In the limit of the resolution of the quantizer $\Delta\rightarrow0$, the quantized measurements in \eqref{eq:uni_model} approach the (noiseless) full-resolution measurements $y_i=\ab_i^T\x_0,~i\in[m].$ In that limit, Theorem \ref{thm:uni_main} suggests that $\|\x-\x_0\|_2\rightarrow0$ provided that $m \geq \Oc( \omega^2(\TcK) )$. As expected, this conclusion is in full-agreement with the well-established results on the phase-transition of noiseless linear inverse problems with full-resolution measurements, e.g., \cite{TroppBowling}.

\paragraph{On the distribution of the dithering signal} As previously mentioned, dithering is essential for the validity of the theorem. This is also revealed in the proof, where the specific uniformly distributed dithering signal guarantees that the quantization noise has zero mean conditioned on the input signal (cf.~\eqref{eq:uni_zero}). This raises a natural question: what are other dithering distributions (other than uniform) that guarantee that \eqref{eq:uni_zero} holds. This question is well-studied in the literature of dithered quantization; in fact, the entire class of such distributions is characterized and we refer the interested reader to the excellent exposition in \cite[Thm.~2]{gray1993dithered} for details. As an example, \eqref{eq:uni_zero} also holds when $\tau_i,~i\in[m]$ are iid and distributed as the sum of $k\geq 1$ uniform random variables in $(-\Delta/2,\Delta/2]$. To further relate this classical result to Theorem \ref{thm:uni_main}, note that the theorem essentially remains valid for all such distributions for the dither signal with bounded support. 

\paragraph{Examples}  The results of Theorem \ref{thm:uni_main} apply under rather general assumptions on $\x_0$ and on the choice of the constraint set $\Kc$. Here, for mere illustration, we provide two concrete examples for the most popular instances of structured signal recovery problems.

\begin{itemize}
\item \emph{Spare recovery}: Assume that $\x_0\in\R^n$ is $s$-sparse, i.e., $\|\x_0\|_0=s$, and that we further choose $$\Kc=\{\x : \|\x\|_1\leq \|\x_0\|_1\}.$$ In this case, it is well-known (e.g., \cite{Cha}) that 
\bea\label{eq:gw_sparse}
\wg^2(\TcK)\leq 2s\log\Big(\frac{n}{s}\Big) + \frac{3}{2}s.
\eea
Directly applying this result to Theorem \ref{thm:uni_main} proves that
$$
\|\widehat{\x}-\x_0\|_2 \leq \Delta\cdot\Oc\Big(\frac{\sqrt{s\, \log({n}/{s})}}{\sqrt{m}} \Big),
$$
provided that $m\geq \Oc\big(s\, \log\big(\frac{n}{s}\big) \big).$

\item \emph{Low-rand recovery}: Assume that $\x_0=\operatorname{vec}(X_0)\in\R^{n^2}$, where $X_0\in\R^{n\times n}$ has rank $r$. We further choose $$\Kc=\{X\,|\,\|X\|_*\leq \|X_0\|_*\},$$ where $\|\cdot\|_*$ denotes the nuclear norm. In this case, it is well-known (e.g., \cite{Cha}) that 
\bea\label{eq:gw_lowrank}
\wg^2(\TcK)\leq 6nr.
\eea
Thus,
$$
\|\widehat{\x}-\x_0\|_2 \leq C\,\Delta \,  \frac{\sqrt{n r}}{\sqrt{m}},
$$
provided that $m\geq \Oc\big(n r\big).$
\end{itemize}

\paragraph{Boundary of $\Kc$} In the examples provided above, the set $\Kc$ is chosen such that $\x_0$ lies on its boundary. This condition is, in general, a prerequisite so that the cone of descent directions $\Dc(\Kc,\x_0)$ is \emph{not} the entire space and that the upper bound of Theorem \ref{thm:uni_main} in terms of $\wg(\TcK)$ is especially useful. Otherwise, $\TcK=\Sc^{n-1}$ and  $\wg(\TcK)\approx\sqrt{n}$, in which case the error bound fails to capture the role of $\Kc$ and of the structure of $\x_0$. When $\Kc-\x_0$ is a star-shaped set (in particular, this holds when $\Kc$ is convex), then it is possible to break that limitation of Theorem \ref{thm:uni_main} by only slightly modifying the proof and by introducing the ``local Gaussian width" in place of the Gaussian width considered here. The technical arguments towards these modifications are well-explained in \cite[Thm.~1.9]{plan2016generalized}. The focus of Theorem \ref{thm:uni_main} (also, of Theorem \ref{thm:one_main}) is on capturing the role of dithered quantized measurements on the recovery performance of the G-Lasso. Hence, we refer the reader to the related works \cite{plan2016generalized} for extensions regarding capturing the role of $\Kc$ and of $\x_0$.


\paragraph{Sub-exponential measurements} It is possible to extend the result of Theorem \ref{thm:uni_main} (with appropriate modifications on the error bound) to a wider class of measurement vectors, in particular to sub-exponential distributions. Indeed, a close inspection of the proof of Theorem \ref{thm:uni_main} reveals that the fact that measurement vectors follow a sub-gaussian distribution is essentially only critical in lower bounding the left-hand side of \eqref{eq:key} and in upper bounding the empirical width $\E\sup_{\w\in\TcK}\sum_{i\in[m]}\frac{1}{\sqrt{m}}\varepsilon_i\ab_i^T\w$ (see Appendix \ref{sec:proofs_uni_upper}). Since corresponding lower and upper bounds are also available for sub-exponential measurement vectors \cite{TroppBowling,oymak2018learning}, it is possible to extend Theorem \ref{thm:uni_main} in that direction.

\paragraph{Related results} 
Essentially, Theorem \ref{thm:uni_main} can be viewed as an extension of the corresponding results in \cite{plan2016generalized} (which are only true for Gaussians measurements) to sub-gaussian measurements\footnote{Of course, Plan and Vershynin \cite{plan2016generalized} study the generalized linear measurement model to which the quantized measurement model is only a special case. Note that the authors state their result under the additional assumption that $\|\x_0\|_2=1$ (see \eqref{eq:muLASSO_error}). However, in the case of uniform dithered quantization with uniformly distributed dither signal, it is relatively easy to extend their result by waiving the $\|\x_0\|_2=1$ assumption. We omit the details for brevity.}. Xu and Jacques \cite{xu2018taking} were the first to study the effect of dithering in uniform quantization schemes in the high-dimensional setting. To showcase the favorable properties of dithering they analyzed the performance of a simple	projected back projection (PBP) method. Naturally, as also confirmed via simulations in Section \ref{sec:simulations}, our proposed algorithm in \eqref{eq:muLASSO} outperforms PBP method. On the other hand, using a different type of analysis, Xu and Jacques are able to extend their result to wider classes of measurement distributions beyond sub-gaussians.


\vp

\paragraph{On the number of quantization bits} The uniform quantization scheme studied thus far does \emph{not} assume any constraints on the number of bits used for quantization of the input signal $\{\ab_i^T\x_0\}_{i\in[m]}$. In general, the input signal can take very large values (relative to the resolution value $\Delta$) and thus it may require a large number of quantization bits, which might be impractical. A possible solution to this issue is to perform clipping, that is to limit the number of quantization levels to some fixed value $N$. Unfortunately, clipping introduces an additional source of error to the measurement model (often referred to as overload distortion). Hence, it is not obvious at the outset how this affects the error bound of Theorem \ref{thm:uni_main}. In the next section, we study one-bit quantization, which can be viewed as an extreme case of clipping ($N=1$). Extending these results to the quantizers with general values of $N$ is an interesting direction for future research. 


\section{One-bit quanitzation}\label{sec:one}

%

We explore the problem of exact signal recovery from one-bit dithered observations when the measurement vectors are sub-gaussian. Recall that we are working with the $m$ measurements $y_i$ of a signal of interest $\x_0$ given by \eqref{eq:one_model}. The dithered signal is uniform in $[-T,T]$, and our main result specifies appropriate values for the range parameter $T>0$. Our estimate of $\x_0$ solves \eqref{eq:muLASSO} with parameter $\mu_\Qc = T$. 

We present the main result of this section in Theorem \ref{thm:one_main} below. All the proofs are deferred to Appendix \ref{sec:proofs_one}.

\begin{thm}\label{thm:one_main}[Error analysis: one-bit dithered quantization]
Suppose that the vectors $\ab_i\in\R^n, i\in[m]$, satisfy Assumption \ref{ass:sub}. Fix any $\x_0\in\R^n$ and let $R>0$ be such that $\|\x_0\|_2\leq R$. Assume that the measurements $y_i, i\in[m]$ are given as in \eqref{eq:one_model} with range parameter $T>0$ and let  $\widehat{\x}$ be a solution to \eqref{eq:muLASSO} with parameter $\mu_{\Qc}=T$. Finally, for the constraint set $\Kc$ in \eqref{eq:muLASSO} define the shorthand  $\TcK:=\Sc^{n-1}\cap\Dc(\Kc,\x_0)$. 
Then, there exist positive constants $C_1,C_2,C_3, c_1 := c_1(L/\alpha), c_2 := c_2(L/\alpha)$ such that the following holds with probability at least 0.99:
\bea\nn
\|\widehat{\x} - \x_0\|_2& \leq C_1LR\,\Big(\sqrt{\log m} \,\maxx \sqrt{\log n}\,\Big)\cdot\frac{\wg(\TcK) + C_2}{\sqrt{m}},
\eea
provided that ~$T = C_3LR\sqrt{\log{m}}$~ and~
$m \ge c_1 \cdot \wg^2(\TcK) + c_2.$
\end{thm}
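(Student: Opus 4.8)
The proof would start from the deterministic reduction \eqref{eq:key}: writing $\mathrm{LB}:=\inf_{\w\in\TcK}\tfrac1m\sum_{i=1}^m(\ab_i^T\w)^2$ and $\mathrm{UB}:=\sup_{\w\in\TcK}\tfrac1m\sum_{i=1}^m e_i\,\ab_i^T\w$, with $e_i=T\sign(\ab_i^T\x_0+\tau_i)-\ab_i^T\x_0$ the quantization noise \eqref{eq:q_error}, one has $\|\wh\|_2\le 2\,\mathrm{UB}/\mathrm{LB}$, so the problem reduces to a lower bound on $\mathrm{LB}$ and an upper bound on $\mathrm{UB}$. For the lower bound the plan is to invoke Mendelson's small--ball method \cite{mendelson1,mendelson2,TroppBowling}: the nondegeneracy part of Assumption~\ref{ass:sub} ($\Exp|\ab^T\ub|\ge\alpha$ for every $\ub\in\Sc^{n-1}$), together with a Paley--Zygmund argument for the sub-gaussian marginals, yields a uniform small-ball probability $\inf_{\w\in\TcK}\Pro(|\ab^T\w|\ge 2\xi)\ge q$ with $\xi,q$ depending only on $L/\alpha$; the method then produces $\mathrm{LB}\ge c_0(L/\alpha)>0$ with probability at least, say, $0.995$, provided $m\ge c_1(L/\alpha)\,\wg^2(\TcK)+c_2(L/\alpha)$. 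This is exactly where \eqref{eq:mgeq} originates, and it does not see the quantizer.

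For the upper bound the plan is to control $\Exp[\mathrm{UB}]$ and then pass to a constant-probability statement via Markov's inequality (sufficient for the stated $0.99$ conclusion, as in Section~\ref{sec:sketch}). Decompose $\Exp[\mathrm{UB}]\le\textrm{Term~I}+\textrm{Term~II}$ as in \eqref{eq:Terms}, with $\textrm{Term~II}=\sup_{\w\in\TcK}\Exp[e_1\ab_1^T\w]$ the bias. I would bound Term~II using the closed form \eqref{eq:mean_one} for $\Exp_\tau[e_1\,|\,\ab_1]$: this conditional mean is supported on the event $\{|\ab_1^T\x_0|>T\}$ and bounded there by $|\ab_1^T\x_0|$, so Cauchy--Schwarz and $\Exp[(\ab_1^T\w)^2]\lesssim L^2$ give $\textrm{Term~II}\lesssim L\big(\Exp\big[(\ab_1^T\x_0)^2\ind{|\ab_1^T\x_0|>T}\big]\big)^{1/2}$. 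Since $\ab_1^T\x_0$ is sub-gaussian with $\|\ab_1^T\x_0\|_{\psi_2}\le L\|\x_0\|_2\le LR$, the choice $T=C_3LR\sqrt{\log m}$ with $C_3$ a sufficiently large absolute constant forces this tail moment below $(LR)^2/m$, so Term~II is absorbed into the additive $C_2$ term of the final estimate.

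The essential work is Term~I, i.e.\ $\Exp\sup_{\w\in\TcK}\tfrac1m\sum_i\big(e_i\ab_i^T\w-\Exp[e_1\ab_1^T\w]\big)$; the difficulty is that the multiplier $e_i$ is \emph{not} independent of $\ab_i$, so neither a direct multiplier inequality nor a direct contraction applies. The plan is to center $e_i$ conditionally: put $b_i:=\Exp_\tau[e_i\,|\,\ab_i]$ and $\bar e_i:=e_i-b_i$. The structural facts I would exploit are that $\Exp_\tau[\bar e_i\,|\,\ab_i]=0$ (hence $\Exp[\bar e_i\ab_i^T\w]=0$) and that $\bar e_i$ is \emph{deterministically} bounded, $|\bar e_i|\le 2T$: on $\{|\ab_i^T\x_0|>T\}$ the sign $\sign(\ab_i^T\x_0+\tau_i)$ is frozen, so $\bar e_i=0$, while on $\{|\ab_i^T\x_0|\le T\}$ one has $|\bar e_i|=|e_i|\le T+|\ab_i^T\x_0|\le 2T$. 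The $b_i$-contribution is a centered process whose multiplier is supported on the rare event $\{|\ab_i^T\x_0|>T\}$; conditioning on $(\ab_i^T\x_0)_i$ and re-running the Term~II tail bound shows it is of lower order. For the $\bar e_i$-part I would (a) symmetrize to $\Exp\sup_{\w\in\TcK}\tfrac1m\sum_i\veps_i\bar e_i\ab_i^T\w$ with i.i.d.\ Rademacher $\veps_i$; (b) condition on $(\ab_i,\tau_i)$, so the $\bar e_i$ become constants with $|\bar e_i/(2T)|\le1$, and apply the contraction/convexity principle (the map $(c_i)\mapsto\Exp_\veps\sup_{\w\in\TcK}\tfrac1m\sum_i\veps_i c_i\ab_i^T\w$ is convex and symmetric on $[-1,1]^m$) to pull out the factor $2T$, reducing to $\Exp_\veps\sup_{\w\in\TcK}\langle\tfrac1m\sum_i\veps_i\ab_i,\w\rangle$; (c) observe that $\w\mapsto\langle\tfrac1m\sum_i\veps_i\ab_i,\w\rangle$ has $\tfrac{L}{\sqrt m}\|\cdot\|_2$-sub-gaussian increments, so the majorizing measure theorem bounds this expectation by $\lesssim\tfrac{L}{\sqrt m}(\wg(\TcK)+C_2)$, the additive $C_2$ reflecting the offset of the descent cone. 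Combining, $\textrm{Term~I}\lesssim\tfrac{TL}{\sqrt m}(\wg(\TcK)+C_2)\asymp LR\sqrt{\log m}\cdot\tfrac{\wg(\TcK)+C_2}{\sqrt m}$, and sharpening the covering/tail estimates that control the quantization noise uniformly replaces the prefactor $\sqrt{\log m}$ by $\sqrt{\log m}\vee\sqrt{\log n}$.

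Plugging the two bounds into $\|\wh\|_2\le 2\,\mathrm{UB}/\mathrm{LB}$ and taking a union bound over the two high-probability events yields the theorem. I expect Term~I to be the main obstacle: the correlation between $e_i$ and $\ab_i$ is what blocks the standard machinery, and the decomposition $e_i=\bar e_i+b_i$ is the device that unlocks it—peeling off a bounded, conditionally centered multiplier (to which symmetrization, contraction and the majorizing measure theorem apply cleanly) from a bias that, thanks to $T\asymp LR\sqrt{\log m}$, lives on an event of negligible probability (this very choice being what injects the logarithmic prefactor). The remaining ingredient, $\mathrm{LB}\gtrsim 1$ with the sharp $\wg^2(\TcK)$ sample complexity rather than an $n$-dependent one, is handled off the shelf by the small-ball method under Assumption~\ref{ass:sub}. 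The detailed estimates are the content of Lemmas~\ref{lem:upper_one} and \ref{lem:upper_two}.
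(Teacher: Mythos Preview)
Your overall architecture matches the paper's proof exactly: the key inequality \eqref{eq:key}, Mendelson's small-ball method for $\mathrm{LB}$ (Lemma~\ref{lem:lower}), and the bias--variance split \eqref{eq:Terms} for $\mathrm{UB}$, with Term~II handled via \eqref{eq:mean_one} and Cauchy--Schwarz just as you describe (this is Lemma~\ref{lem:upper_two}). Your conditional-centering split $e_i=\bar e_i+b_i$ is a clean way to see the structure; note that it is \emph{equivalent} to a truncation: since $b_i=0$ on $\{|\ab_i^T\x_0|\le T\}$ and $\bar e_i=0$ on $\{|\ab_i^T\x_0|>T\}$, you have $\bar e_i=e_i\ind{|\ab_i^T\x_0|\le T}$ and $b_i=e_i\ind{|\ab_i^T\x_0|>T}$. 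The paper does the same thing, writing $\delta_ie_i+\delta_i^c e_i$ with $\delta_i=\ind{|\ab_i^T\x_0|\le\nu}$, and then applies symmetrization, contraction, and the majorizing-measure bound to the bounded piece exactly as you propose (Lemma~\ref{lem:upper_one}).

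The genuine gap is your treatment of the $b_i$ part. Your claim that ``re-running the Term~II tail bound shows it is of lower order'' does not work: Term~II controls the deterministic quantity $\sup_{\w}\E[b_1\ab_1^T\w]$, whereas what Term~I needs is a bound on the \emph{fluctuations} $\E\sup_{\w\in\TcK}\frac{1}{m}\sum_i b_i\ab_i^T\w$, and conditioning on $(\ab_i^T\x_0)_i$ does not decouple $b_i$ from $\ab_i^T\w$ for general sub-gaussian $\ab_i$. The paper bounds this piece crudely via $\sup_{\w\in\TcK}(\cdot)\le\big\|\frac{1}{m}\sum_i\varepsilon_i\delta_i^c e_i\ab_i\big\|_2$ together with $\E\|\ab_1\|_2^2\le CL^2n$, which injects a factor $\sqrt{n}$. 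To absorb that $\sqrt{n}$ one needs $\Pro(|\ab_1^T\x_0|>\nu)\lesssim 1/(mn)$, i.e.\ a truncation level $\nu\asymp LR\sqrt{\log(mn)}$ rather than your $T\asymp LR\sqrt{\log m}$. This is precisely why the paper truncates at a \emph{separate} level $\nu$, and it is exactly where the $\sqrt{\log n}$ in the final bound originates---not from a ``sharpening'' of covering estimates as your last sentence suggests (sharpening would tighten, not loosen, the bound). With your split at level $T$, the $b_i$ contribution is of order $L^2R\sqrt{n\log m}/\sqrt{m}$, which is \emph{not} dominated by $L^2R\sqrt{\log(mn)}\,(\wg(\TcK)+C_2)/\sqrt{m}$ in the interesting structured regime $\wg(\TcK)\ll\sqrt{n}$. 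The fix is simply to truncate at $\nu=C\,LR\sqrt{\log(mn)}$ instead of $T$; then your bounded piece satisfies $|e_i\ind{|\ab_i^T\x_0|\le\nu}|\le T+\nu$ and the rest of your argument goes through verbatim, recovering Lemma~\ref{lem:upper_one}.
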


\subsection{Remarks}
\label{rem:1bit}

\paragraph{On the error decay with $m$}Theorem \ref{thm:one_main} guarantees an error decay $\Oc(\sqrt{\frac{\log{m}}{m}}\,)$ as a function of the number of measurements. Therefore, an important conclusion of the theorem is that the proposed LASSO estimator in \eqref{eq:muLASSO} only leads to $\sqrt{\log{m}}$ loss with respect to the best possible rate $1/\sqrt{m}$ (see \eqref{eq:LASSO_error}). In other words, despite only having (appropriately dithered) one-bit measurements there is relatively little loss in performance with respect to full precision samples.
%

\paragraph{Gaussian measurements} When the measurement vectors have entries iid Gaussian, it is possible to perform a tighter analysis, which is presented in Appendix \ref{sec:gaussian} and which leads to the following result. 
\begin{thm}[Gaussian measurements]\label{thm:one_gauss}
Let the same setting as in Theorem \ref{thm:one_main} except that the measurement vectors $\ab_i,~i\in[m]$, are assumed to have iid standard normal entries. Then, for sufficiently large $m$, there exist absolute constants $C_1,C_2,c>0$ such that the following holds with probability at least 0.99: 
\bea\nn
\|\widehat{\x} - \x_0\|_2\leq C_1\, R\sqrt{\frac{\log{m}}{m}}\cdot\big(\wg(\TcK)+C_2\big),
\eea
provided that ~$T = R\sqrt{\log{m}}$~ and~$\sqrt{m-1} \ge \wg(\TcK) + c.$
\end{thm}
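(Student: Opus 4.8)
\emph{Proof strategy.} The plan is to follow the template of Section~\ref{sec:sketch} (and of the proof of Theorem~\ref{thm:one_main}), but to exploit the rotational invariance of the Gaussian design so as to tighten the variance estimate and, in particular, remove the $\sqrt{\log n}$ factor. As before, write $\wh=\widehat{\x}-\x_0$, assume $\wh\neq\mathbf 0$, let $\A$ be the $m\times n$ matrix with rows $\ab_i^T$, and start from the key inequality \eqref{eq:key}, i.e.\ $\|\wh\|_2\le 2\,\mathrm{UB}/\mathrm{LB}$ with $\mathrm{LB}=\inf_{\w\in\TcK}\frac1m\sum_i(\ab_i^T\w)^2$ and $\mathrm{UB}=\sup_{\w\in\TcK}\frac1m\sum_i e_i\,\ab_i^T\w$, where $e_i$ is the quantization noise \eqref{eq:q_error} with $\mu_\Qc=T$. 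The one new structural ingredient is the decomposition $\ab_i=g_i\,\x_0/\|\x_0\|_2+\ab_i^{\perp}$, where $g_i:=\ab_i^T\x_0/\|\x_0\|_2\sim\Nc(0,1)$ and $\ab_i^{\perp}:=\Pbp\ab_i$ with $\Pbp=\I-\x_0\x_0^T/\|\x_0\|_2^2$; here $g_i$ and $\ab_i^{\perp}$ are independent, and $e_i$ is a deterministic function of $(g_i,\tau_i)$ and therefore independent of $\ab_i^{\perp}$. This exact decoupling is what substitutes for the chaining and union-bound arguments of the general sub-gaussian case.

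\emph{Lower bound on $\mathrm{LB}$.} I would treat this exactly as in Theorem~\ref{thm:one_main}; for a Gaussian design the most direct route is Gordon's Gaussian comparison inequality combined with Gaussian Lipschitz concentration, which yields $\inf_{\w\in\TcK}\frac1{\sqrt m}\|\A\w\|_2\ge\frac1{\sqrt m}\big(\sqrt{m-1}-\wg(\TcK)-t\big)$ with probability at least $1-e^{-t^2/2}$ (see, e.g., \cite{TroppBowling}). Under the hypothesis $\sqrt{m-1}\ge\wg(\TcK)+c$ for a suitable absolute constant $c$, and for $m$ large relative to $\wg^2(\TcK)$, an appropriate choice of $t$ makes $\mathrm{LB}$ at least an absolute positive constant on an event of probability at least $0.999$.

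\emph{Upper bound on $\mathrm{UB}$.} Following the split \eqref{eq:Terms}, $\E[\mathrm{UB}]\le\mathrm{Term~I}+\mathrm{Term~II}$. Write $\ab_i^T\w=\frac{\inp{\x_0}{\w}}{\|\x_0\|_2}g_i+(\ab_i^{\perp})^T\w$ and note $\E[e_1(\ab_1^{\perp})^T\w]=0$, since $\ab_1^{\perp}$ is centered and independent of $e_1$. For $\mathrm{Term~II}$: identity \eqref{eq:mean_one} with $\mu_\Qc=T$ gives $\E[e_1\mid\ab_1]=-\sign(\ab_1^T\x_0)\,(|\ab_1^T\x_0|-T)_+$, an odd function of $\ab_1^T\x_0=\|\x_0\|_2 g_1$, whence $\E[e_1\ab_1^T\w]=-\frac{\inp{\x_0}{\w}}{\|\x_0\|_2}\,\E\!\big[\,|g_1|(\|\x_0\|_2|g_1|-T)_+\,\big]$ and, since $|\inp{\x_0}{\w}|/\|\x_0\|_2\le1$, $\mathrm{Term~II}\le\E\!\big[\,|g_1|(\|\x_0\|_2|g_1|-T)_+\,\big]$; with $T=R\sqrt{\log m}\ge\|\x_0\|_2\sqrt{\log m}$ the event $\{\|\x_0\|_2|g_1|>T\}$ is contained in $\{|g_1|>\sqrt{\log m}\}$, so a one-line Gaussian tail estimate gives $\mathrm{Term~II}\lesssim R\sqrt{\log m/m}$. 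For $\mathrm{Term~I}$, using $|\inp{\x_0}{\w}|/\|\x_0\|_2\le1$ again, bound it by $\E\big|\frac1m\sum_i(e_ig_i-\E[e_1g_1])\big|+\E\sup_{\w\in\TcK}\frac1m\sum_i e_i(\ab_i^{\perp})^T\w$. The first term is a centered average of i.i.d.\ sub-exponential variables with $\E[(e_1g_1)^2]\lesssim R^2\log m$ (using $|e_1|\le T+\|\x_0\|_2|g_1|$), hence $\lesssim R\sqrt{\log m/m}$ for $m$ large. For the second term, condition on $(g_i,\tau_i)_{i\in[m]}$, equivalently on $(e_i)_i$: then $\frac1m\sum_i e_i\ab_i^{\perp}$ is a centered Gaussian vector on $\x_0^{\perp}$ with covariance $\frac1{m^2}\big(\sum_i e_i^2\big)\Pbp$, so its conditional expected supremum over $\TcK$ equals $\frac1m\big(\sum_i e_i^2\big)^{1/2}\wg(\Pbp\TcK)$ with $\Pbp\TcK:=\{\Pbp\w:\w\in\TcK\}$; since $\Pbp\w=\w-\frac{\inp{\x_0}{\w}}{\|\x_0\|_2}\frac{\x_0}{\|\x_0\|_2}$ one gets $\wg(\Pbp\TcK)\le\wg(\TcK)+\E|\langle\x_0/\|\x_0\|_2,\g\rangle|=\wg(\TcK)+\sqrt{2/\pi}$, and $\E\big(\sum_i e_i^2\big)^{1/2}\le\big(m\,\E e_1^2\big)^{1/2}\lesssim\sqrt m\,R\sqrt{\log m}$ (again via $|e_1|\le T+\|\x_0\|_2|g_1|$), so this term is $\lesssim\frac{R\sqrt{\log m}}{\sqrt m}\,(\wg(\TcK)+C_2)$. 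Altogether $\E[\mathrm{UB}]\lesssim\frac{R\sqrt{\log m}}{\sqrt m}(\wg(\TcK)+C_2)$, and Markov's inequality (with the handful of high-probability events tuned to fail with total probability $\le0.01$) upgrades this to the same bound for $\mathrm{UB}$ with probability at least $0.99$. Plugging the two bounds into the key inequality gives the stated estimate.

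\emph{Main obstacle.} The lower bound is routine, and $\mathrm{Term~II}$ is a short Gaussian-tail computation once $T$ is fixed at $R\sqrt{\log m}$; the crux is $\mathrm{Term~I}$, i.e.\ showing that the \emph{unknown} dither only inflates the variance by a benign amount. The Gaussian rotational invariance — the exact split $\ab_i=g_i\x_0/\|\x_0\|_2+\ab_i^{\perp}$ with $\ab_i^{\perp}$ independent of $e_i$ — is exactly what lets the relevant supremum collapse to a bare Gaussian width $\wg(\Pbp\TcK)\le\wg(\TcK)+\sqrt{2/\pi}$, with no covering of $\Sc^{n-1}$ and hence no $\sqrt{\log n}$ loss (contrast with Theorem~\ref{thm:one_main}). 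The two places that need genuine care are (i) upgrading the in-expectation bounds to the stated $0.99$-probability statement, and (ii) tracking how the ``$m$ sufficiently large'' and $\sqrt{m-1}\ge\wg(\TcK)+c$ hypotheses feed into the lower bound on $\mathrm{LB}$ so that $2\,\mathrm{UB}/\mathrm{LB}$ comes out exactly at the rate $R\sqrt{\log m/m}\,(\wg(\TcK)+C_2)$.
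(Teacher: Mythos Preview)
Your proposal is correct and follows essentially the same route as the paper: both proofs use Gordon's escape-through-a-mesh inequality for $\mathrm{LB}$ and, for $\mathrm{UB}$, the orthogonal decomposition $\ab_i=g_i\,\x_0/\|\x_0\|_2+\Pbp\ab_i$ together with the independence of $e_i$ and $\Pbp\ab_i$, so that the supremum in the perpendicular direction collapses to a bare Gaussian width. The only cosmetic differences are that the paper applies the orthogonal split directly to $\mathrm{UB}$ (rather than layering it on top of the bias--variance decomposition of \eqref{eq:Terms}) and evaluates the resulting one-dimensional Gaussian integrals explicitly (cf.\ \eqref{eq:xi}--\eqref{eq:Eeta}) instead of using your cruder bound $|e_1|\le T+\|\x_0\|_2|g_1|$; neither difference affects the argument or the final rate.
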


Compared to Theorem \ref{thm:one_main}, the theorem above specifies the quantization parameter $T$ with  an exact constant for Gaussian measurements. In our simulations, we observe that the same value yields good results even for other sub-gaussian distributions. Also, observe the close agreement in the error formulas of the two theorems. The formula  in Theorem \ref{thm:one_main}  has an extra $\sqrt{\log{n}}$ factor; however, note that this term disappears in the overdetermined  regime $m>n$. Finally, in the Appendix \ref{sec:gaussian} we obtain an even tighter expression for the error bound than the one that appears above. It is also shown that the former simplifies to the latter at the cost of requiring that $m$ is large enough and an extra multiplicative constant $C_1$ (that can be shown to approach 2 for increasing $m$).
%

\paragraph{On the classical statistics regime} The pervasive working assumption in classical statistics is that the number of measurements grows large $m\rightarrow\infty$, while the signal dimension $n$ remains fixed. In that context, signal recovery from dithered one-bit measurements using least-squares (i.e., \eqref{eq:muLASSO} without any constraints) was previously studied by Dabeer and Karnik \cite[Thm.~1]{dabeer2006signal}. Compared to their result, Theorem \ref{thm:one_gauss} is valid more generally: (i) it is non-asymptotic; (ii) it applies to the the high-dimensional regime (both $m$ and $n$ large); (iii) it captures the role of the constraint set $\Kc$. As expected, when specialized to the classical statistics regime, our theorem  is in full agreement with \cite[Thm.~1]{dabeer2006signal}, which also captures the $\sqrt{\frac{\log m}{m}}$-rate.

\paragraph{On the dithering distribution} Theorem \ref{thm:one_main} assumes iid uniform dithering signals $\tau_i$ in the interval $[-T,T]$, as well as a specific choice of $T=C_3LR\sqrt{\log{m}}$. A first interesting question is what is the best possible rate as a function of $m$ for uniform dithering signals: is it possible to improve upon the logarithmic rate loss of Theorem \ref{thm:one_main}? A second question asks whether better rates can be obtained with different distributions for the dithering signal. Interestingly, Dabeer and Karnik \cite[Thm.~2]{dabeer2006signal} give a negative answer to the latter question in the classical asymptotic regime with large $m$ and fixed $n$. It is interesting to study these questions in the high-dimensional regime that is of modern interest. 

\paragraph{On the scaling with $R$} Note that the proposed scheme requires knowledge of an upper bound $R$ on the true norm $\|\x_0\|_2$ of the unknown signal. Indeed, both the quantization-scheme parameter $T$ and the G-Lasso parameter $\mu_\Qc$ are required by Theorem \ref{thm:one_main} to be proportional to such a value $R$. Moreover, the theorem predicts that the normalized estimation error $\frac{\|\hat\x-\x_0\|_2}{\|\x_0\|_2}$ scales linearly with the the overshoot $\frac{R}{\|\x_0\|_2}$ in our guess regarding the signal's norm.

\paragraph{Examples} Here, we specialize our results to the two popular instances that were also considered in Section~\ref{sec:uni}.
\begin{itemize}
\item \emph{Sparse recovery}:
 By direct application of \eqref{eq:gw_sparse} in Theorem \ref{thm:one_main} we get that
$$
\|\widehat{\x}-\x_0\|_2 \leq L\,R\cdot\Oc\Big(\frac{\sqrt{s\, \log({n}/{s})}}{\sqrt{m}}\big(\sqrt{\log m} \maxx \sqrt{\log n}  \,\big)\Big)
$$

\item \emph{Low-rand recovery}: 
By direct application of \eqref{eq:gw_lowrank}  in Theorem \ref{thm:one_main} we get that
$$
\|\widehat{\x}-\x_0\|_2 \leq L\,R\cdot\Oc\Big( \frac{r\, n}{\sqrt{m}}\,\big(\sqrt{\log m} \maxx \sqrt{\log n}\,\big)\Big).
$$

\end{itemize}

\paragraph{Related results} As mentioned before, our work is in-part motivated by recent results in \cite{dirksen2018robust}. In the context of uniformly dithered one-bit quantization, Dirksen and Mendelson \cite{dirksen2018robust} propose and analyze a different convex-optimization based estimator which shares some similarity with the LASSO in \eqref{eq:muLASSO}. Specifically, put in our notation, they solve the following program
\bea\label{eq:Mendelson}
\widehat{\x}:=\arg\max_{\x\in\Kc} \frac{1}{m}\sum_{i=1}^m y_i\ab_i^T\x - \frac{1}{2\lambda}\|\x\|_2^2,
\eea
where the value of the regularizer $\lambda$ is set to $\lambda=T$. To see that this is indeed rather similar to the G-Lasso objective in \eqref{eq:muLASSO}, expand the squares in the latter and recall that we set $\mu_{\Qc}=T$. Empirically, we have observed that the two algorithms perform similarly for iid sub-gaussian measurements, when the value of $T$ is set according to our Theorem \ref{thm:uni_main}. However, note that the G-LASSO also works for the uniform quantization scheme with only a simple tuning of the parameter $\mu_\Qc$.   In terms of theoretical results, the error guarantees of Theorem \ref{thm:uni_main} and our suggested value for $T$ are not directly comparable to corresponding results in \cite[Thm.~1.3]{dirksen2018robust}, which are (perhaps) less explicit in terms of the problem parameters, e.g., $m$, $n$, $R$, etc.. For example, our result naturally suggests a ``good" value of $T\propto R\sqrt{\log{m}}$. On the other hand, we mention that \cite[Thm.~1.3]{dirksen2018robust} holds under a more general setting since it also accounts for pre- and post-quantization noise and also yields uniform guarantees over all $\x_0\in\Kc$. We leave such extensions of our results as future work.

\section{Simulations}
\label{sec:simulations}

In this section we experimentally evaluate the recovery performance of the Generalized LASSO in  \eqref{eq:muLASSO} when dithered quantized measurements are available. We consider both the uniform and the one-bit quantization schemes in the presence of uniformly distributed dithering, as shown in \eqref{eq:uni_model} and \eqref{eq:one_model}, respectively. In addition, we compare the performance of the  method with corresponding ones proposed recently in \cite{xu2018taking} and \cite{dirksen2018robust} for uniform and one-bit quantization, respectively.  

We present results in which the measurements vectors have entries that are iid Rademacher random variables\footnote{A Rademacher random variable takes two valued $\pm1$ with equal probability.}.
%
%
 Throughout our experiments, we keep the dimension of the signal fixed $n = 100$. The unknown signal $\x_0$ is chosen to be $s$-sparse constructed as follows. First, we select the support of $\x_0$ uniformly at random among all possible supports. Then, the non-zero entries of $\x_0$ are sampled iid from the standard normal distribution. Finally, we scale the entries of the signal such that  $\|\x_0\|_2 = 8$. For one-bit measurements, we choose $R=10>\|\x_0\|_2$. 
%
In order to estimate $\x_0$, we solve the G-Lasso in \eqref{eq:muLASSO} with $\Kc=\{\x\in\R^n~|~\|\x\|_1\leq \|\x_0\|_1\}$ using the CVX package for Matlab \cite{cvx}. The parameter $\mu_\Qc$ in \eqref{eq:muLASSO} is set to $1$ for uniform quantization and to $R\sqrt{\log{m}}$ for one-bit measurements. 
  Throughout this section, each plot is obtained by averaging over $200$ Monte Carlo realizations, with independently sampled measurement vectors $\{\ab_i\}_{i \in [m]}$, signal vector  $\x_0$, and dithering $\{\tau_i\}_{i \in [m]}$ across different trials.

\begin{figure}[t]

\begin{minipage}[b]{1.0\linewidth}
  \centering
  \centerline{\includegraphics[width=0.75\linewidth]{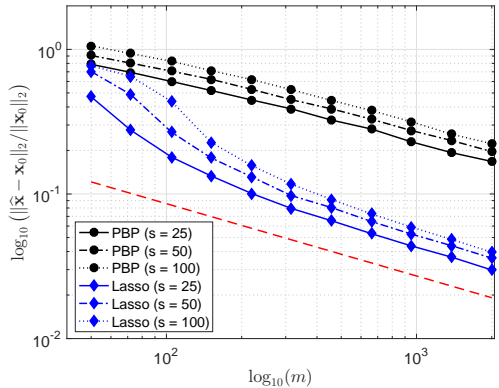}}
\end{minipage}
\caption{Comparison between the G-Lasso (see~\eqref{eq:muLASSO}) and the PBP~\cite{xu2018taking} for  dithered uniform-quantized measurements  (see~\eqref{eq:uni_model}) and for iid Rademacher measurements vectors. The error plots correspond to the following choice of parameters: $\Delta = 3$, $n = 100$, and sparsity $s \in \{25, 50, 100\}$. The red dashed line highlights the $\frac{1}{\sqrt{m}}$ scaling predicted by Theorem \ref{thm:uni_main}.}
\label{fig:uniform-gauss}
\end{figure}

\begin{figure}[t]

\begin{minipage}[b]{1.0\linewidth}
  \centering
  \centerline{\includegraphics[width=0.75\linewidth]{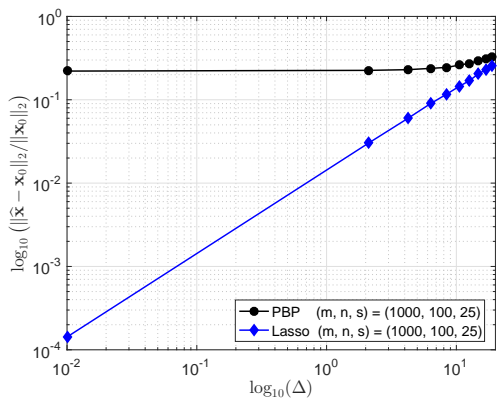}}
\end{minipage}
\caption{Illustration of the error dependence on $\Delta$ for the G-Lasso \eqref{eq:muLASSO} and for the PBP~\cite{xu2018taking}. The plots correspond to the uniform dithered quantizations with iid Rademacher measurement vectors and $(m, n, s) = (1000, 100, 25)$.}
\label{fig:uniform-delta-gauss}
\end{figure}

\begin{figure}[t]

\begin{minipage}[b]{1.0\linewidth}
  \centering
  \centerline{\includegraphics[width=0.8\linewidth]{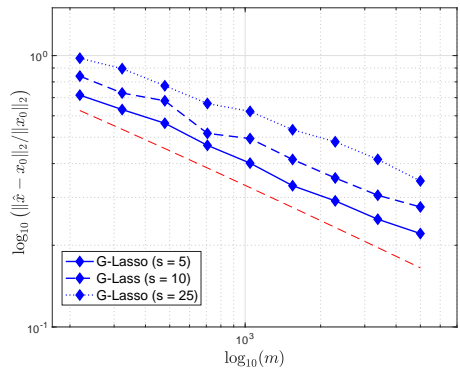}}
\end{minipage}
\caption{Illustration of the recovery performance of the G-Lasso \eqref{eq:muLASSO} for the one-bit dithered quantization (see~\eqref{eq:one_model}) and iid Rademacher measurement vectors. For the error plots we take $n = 100$ and sparsity $s \in \{5, 10, 25\}$. The red dashed line highlights the $\sqrt{\frac{\log m}{m}}$ scaling predicted by Theorem \ref{thm:one_main}.}
\label{fig:1bit-gauss}
\end{figure}

In Figure~\ref{fig:uniform-gauss}, we compare the G-Lasso (cf.,~\eqref{eq:muLASSO}) with the projected back projection (PBP) method from \cite{xu2018taking} for the case of uniform dithered quantization. Observe that  the error plots for \eqref{eq:muLASSO} concur with the $\frac{1}{\sqrt{m}}$ scaling as predicted by Theorem~\ref{thm:uni_main}. Furthermore, note that \eqref{eq:muLASSO} significantly outperforms the PBP method. Another advantage of \eqref{eq:muLASSO} is its dependence on the parameter $\Delta$: when continuously decreasing the value of $\Delta$  (which results into more informative measurements), this leads to sustained decrease in the recovery error (cf.,~Remark~\ref{rem:uni}). On the other hand, the PBP hits an error floor as the measurements increase, thus failing to utilize the information present in the measurements beyond a certain point. This phenomenon is clearly illustrated in Figure~\ref{fig:uniform-delta-gauss}. 

Focusing on the one-bit dithered quantization scheme (cf.~\eqref{eq:one_model}), we plot the recovery performance of the G-Lasso in Figure~\ref{fig:1bit-gauss}. Note that the recovery performance demonstrates a scaling that behaves as $\frac{\sqrt{\log m}}{\sqrt{m}}$ with increasing $m$, as suggested by Theorem~\ref{thm:one_gauss}.  During our experiments, we also observed that the performance of the G-Lasso is almost identical to that the alternative algorithm in \eqref{eq:Mendelson} from \cite{dirksen2018robust}. However, as discussed in Section~\ref{rem:1bit}, our method offers additional benefits such as: (i) An explicit choice of the parameters $\mu_{\Qc}$ appearing in the objective function. (Note the $\mu_{\Qc}$ is associate with the parameter $T$, which controls the range of the dithering.) (ii) An algorithm that is already well-established and can be commonly used both for one-bit, as well as, uniform quantization schemes.


\section{Future work}
\label{sec:conclude}
There are several directions for future work related to the results presented in this paper. Some of the straightforward ones, include extensions to account for pre- and post-quantization noise, as well as, establishing uniform guarantees over all unknown signals of interest. 
Also, in Section \ref{sec:one} we discussed a number of questions regarding other distributions for the dithering signal, the potential optimality of the error rate of $\sqrt\frac{\log{m}}{m}$ in Theorem \ref{thm:one_main}, etc.

One can also imagine extending our results to general quantization schemes, e.g, general number of quantization levels~\cite{NIPS}. Finally, considering other loss functions in \eqref{eq:muLASSO} (e.g., see \cite{genzel2017high}) and the performance of first-order solvers (e.g., see \cite{oymak2016fast}) are also interesting directions to pursue.

%
%


\bibliographystyle{alpha}
\bibliography{compbib}


 \appendices

\section{Proofs for Section \ref{sec:uni}}\label{sec:proofs_uni}

 Throughout the appendices, we drop the subscript $\Qc$ from the parameter $\mu_\Qc$ and simply write $\mu$, instead. Also, constants denoted by $C_1,C_2,\ldots,c_1,c_2,\ldots$ may change from line to line.

\subsection{Proof of Theorem \ref{thm:uni_main}}\label{sec:proof_uni_main}
We follow the proof strategy as described in Section \ref{sec:sketch}. Specifically, we continue from the key inequality in \eqref{eq:key}; recall the definition of the shorthand notation $\mathrm{LB}$ and $\mathrm{UB}$ for the involved terms. 

First, it is shown in Lemma \ref{lem:lower} that there exist constants $C_1,c>0$ such that with probability $0.995$ it holds that $\mathrm{LB}\geq C_1\,L$ provided \eqref{eq:mgeq} holds. Second, we show in Lemma \ref{lem:upper_uni} that $\E[\mathrm{UB}]\leq C_2\,L\,\Delta\cdot\frac{\wg(\TcK)}{\sqrt{m}}$. Thus, by Markov's inequality: $\mathrm{UB}\leq C_3\,L\,\Delta \cdot\frac{\wg(\TcK)}{\sqrt{m}}$ with probability at least $0.995$.

Therefore, by conditioning on the two aforementioned high-probability events and by using a simple union bound it follows from \eqref{eq:key} that with probability $0.99$:
$$
\|\hat\x-\x_0\|_2 \leq \frac{C_3}{C_1}\,\Delta \cdot\frac{\wg(\TcK)}{\sqrt{m}}.
$$
This completes the proof of the theorem.

\subsection{Lower bound}\label{sec:lower}
The following lemma is essentially a restatement of \cite[Thm.~6.3]{TroppBowling} applied to our setting. Its proof is based on Mendelson's small ball method \cite{mendelson1,mendelson2}.
\begin{lem}[Lower Bound -- Mendelson's small-ball method]\label{lem:lower} Suppose that random vectors $\ab_i\in\R^n, i\in[m],$ satisfy Assumption \ref{ass:sub}. Then, for any $\Tc\subset\R^n$, there exist  constants $C,c_1=c_1(L/\alpha), c_2 = c_2(L/\alpha)>0$ such that with probability at least 0.995 it holds:
\bea
\inf_{\w\in\Tc}\frac{1}{m}\sum_{i\in[m]}(\ab_i^T\w)^2 \geq C\,L, \nn
\eea
provided that $m \ge c_1 \cdot \omega^2(\Tc) + c_2$.
\end{lem}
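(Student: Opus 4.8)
The plan is to invoke Mendelson's small-ball method, which gives a high-probability lower bound on $\inf_{\w\in\Tc}\frac{1}{m}\sum_i(\ab_i^T\w)^2$ in terms of two ingredients: a small-ball probability parameter and a mean empirical width (Rademacher complexity) of the index set. Concretely, I would first record the small-ball condition: for every $\w\in\Sc^{n-1}$ (hence, after rescaling, for every $\w\in\Tc$ if $\Tc\subseteq\Sc^{n-1}$, which is the relevant case since $\TcK=\Sc^{n-1}\cap\Dc(\Kc,\x_0)$), one needs a constant $\xi>0$ and $\theta\in(0,1)$ with $\Pr\big(|\ab^T\w|\ge \xi\big)\ge\theta$. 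This is where Assumption \ref{ass:sub} enters through two of its three bullets: nondegeneracy gives $\E|\ab^T\w|\ge\alpha$, while the sub-gaussian marginal bound $\psin{\ab^T\w}\le L$ controls the upper tail; combining these via the Paley–Zygmund inequality (applied to $|\ab^T\w|$, using $\E|\ab^T\w|^2\lesssim L^2$ and $\E|\ab^T\w|\ge\alpha$) yields $\Pr(|\ab^T\w|\ge \alpha/2)\ge c(\alpha/L)^2$ for an absolute constant $c$. Thus $\xi\asymp\alpha$ and $\theta\asymp(\alpha/L)^2$, both uniform over $\w\in\Sc^{n-1}$.

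Next I would state the conclusion of Mendelson's method in the form given in \cite[Thm.~6.3]{TroppBowling}: with probability at least $1-e^{-c_0 m\theta^2}$ (or any fixed target such as $0.995$, at the cost of adjusting constants),
\[
\inf_{\w\in\Tc}\Big(\frac{1}{m}\sum_{i\in[m]}(\ab_i^T\w)^2\Big)^{1/2}\;\ge\;\xi\theta\sqrt{m}\;-\;2\,\E\sup_{\w\in\Tc}\sum_{i\in[m]}\veps_i\,\ab_i^T\w\;-\;\xi\,t,
\]
where $\veps_i$ are iid Rademacher signs independent of the $\ab_i$, and $t>0$ governs the probability. The empirical width term is bounded, via the sub-gaussianity of the $\ab_i$ (Dudley / majorizing-measure, or directly the standard comparison of the Rademacher process to a Gaussian one), by $C\,L\,\sqrt{m}\,\wg(\Tc)$ for an absolute constant $C$. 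Substituting this in, the right-hand side is at least $\big(\xi\theta - 2CL\wg(\Tc)/\sqrt{m} - \xi t/\sqrt m\big)\sqrt m$; choosing $t$ a fixed multiple of $\theta$ to absorb the last term and requiring $m\ge c_1\wg^2(\Tc)+c_2$ with $c_1\asymp (L/(\xi\theta))^2\asymp (L/\alpha)^2\cdot(L/\alpha)^4$ (a function of $L/\alpha$ only) and $c_2$ large enough to make the failure probability below $0.005$, the bracket is bounded below by $\xi\theta/2\asymp \alpha(\alpha/L)^2$. Squaring gives $\inf_{\w\in\Tc}\frac1m\sum_i(\ab_i^T\w)^2\ge C'\alpha^2(\alpha/L)^4$; since the lemma only asserts a bound of the form $C\,L$ (really any positive constant depending on $L,\alpha$), absorbing the $\alpha$-dependence into the constant and relabeling finishes the argument.

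The main obstacle — though it is a well-trodden one — is establishing the uniform small-ball estimate, i.e. producing \emph{absolute} (or $L/\alpha$-dependent) constants $\xi,\theta$ valid simultaneously for all $\w\in\Sc^{n-1}$; this is exactly the point where all three parts of Assumption \ref{ass:sub} are needed (symmetry is used implicitly to ensure the marginals are genuinely spread around zero, nondegeneracy for the lower anti-concentration, sub-gaussianity for the second-moment control in Paley–Zygmund). Everything after that — the empirical-width bound and the bookkeeping of constants and the $m\gtrsim\wg^2$ threshold — is a direct application of the cited theorem and standard chaining estimates, so no genuinely new difficulty arises there.
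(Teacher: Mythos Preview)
Your proposal is correct and follows essentially the same route as the paper: both simply invoke Mendelson's small-ball method in the form of \cite[Thm.~6.3]{TroppBowling}, with the small-ball parameter $\xi\theta\asymp\alpha^3/L^2$ (your Paley--Zygmund computation makes explicit what the paper absorbs into the citation) and the empirical width controlled by $CL\sqrt{m}\,\wg(\Tc)$. One minor remark: the symmetry part of Assumption~\ref{ass:sub} is not actually needed for this lemma --- Paley--Zygmund uses only the first-moment lower bound (nondegeneracy) and the second-moment upper bound (sub-gaussianity) --- so your parenthetical about symmetry is unnecessary here.
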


\begin{proof} The statement directly follows from \cite[Theorem 6.3]{TroppBowling} which shows that, for all $t \geq 0$, the following holds with probability at least $1 - e^{-ct^2}$.
\begin{align}
\inf_{\w \in \Tc} \Big(\frac{1}{m}\sum_{i\in[m]}(\ab_i^T\w)^2\Big)^{1/2} \geq C_1\, \frac{\alpha^3}{L^2} \sqrt{m} - C_2\, L \cdot \omega(\Tc) - \alpha t, \nn
\end{align}
where $C_1,C_2>0$ are absolute constants.
\end{proof}

\subsection{Upper bound}\label{sec:proofs_uni_upper}
In this section we derive an upper bound on the quantity $\E[\mathrm{UB}]$ in the RHS of \eqref{eq:key}. We have already decomposed $\E[\mathrm{UB}]$ in two terms in \eqref{eq:Terms}. Moreover, we have shown in \eqref{eq:uni_zero} that Term II is zero! Recall that this is due to property \eqref{eq:mean_uni} of uniform dithered quantization with uniformly distributed dithering (see also Lemma \ref{lem:key_uniform}). Hence, in the remainder we focus on obtaining an upper bound on Term I.

Our main result is summarized in the following lemma.
\begin{lem}[Upper bound -- Uniform case]\label{lem:upper_uni} Let $\ab_i, y_i,~i\in[m]$, and $\x_0$ be as in Theorem \ref{thm:uni_main}. Then, for any subset $\Tc\subset\R^n$, there exists absolute constant $C>0$ such that
\bea
\label{eq:uniform_upp2}
\mathbb{E} \sup_{\w\in\Tc}\frac{1}{m} \sum_{i = 1}^{m}\big(y_i - \ab_i^T\x_0\big)\ab_i^T\w \leq C\,\Delta\,L\cdot\frac{\wg(\Tc)}{\sqrt{m}}.
\eea
\end{lem}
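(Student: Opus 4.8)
The plan is to follow exactly the three-step program outlined in Section~\ref{sec:sketch}. Write $e_i := y_i-\ab_i^T\x_0$ for the quantization noise (here $\mu=1$). Two elementary facts drive everything. First, the dithered mid-riser construction \eqref{eq:uni_model} with $\tau_i\in(-\Delta/2,\Delta/2]$ makes $e_i$ deterministically bounded, $|e_i|\le\Delta$, since both the pure quantizer error and the dither lie in $(-\Delta/2,\Delta/2]$; this boundedness is the source of the factor $\Delta$ in \eqref{eq:uniform_upp2}. Second, the zero-mean identity \eqref{eq:mean_uni} for uniformly dithered uniform quantization (Lemma~\ref{lem:key_uniform}) gives $\E_\tau[e_i\mid\ab_i]=0$, hence $\E[e_1\ab_1^T\w]=0$ for every $\w$. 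The second fact lets me pass for free to the centered process: $\E\sup_{\w\in\Tc}\frac1m\sum_i e_i\ab_i^T\w\le\E\sup_{\w\in\Tc}\frac1m\sum_i\big(e_i\ab_i^T\w-\E[e_i\ab_i^T\w]\big)$, i.e.\ it suffices to bound ``Term~I'' of \eqref{eq:Terms} for the generic set $\Tc$.

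Next I would run symmetrization, contraction, and a standard empirical-width estimate. Applying the symmetrization inequality for empirical processes \cite{VerBook} to the independent summands $\w\mapsto e_i\ab_i^T\w$ introduces i.i.d.\ Rademacher signs $\veps_i$ and gives
\[
\E\sup_{\w\in\Tc}\frac1m\sum_{i=1}^m\big(e_i\ab_i^T\w-\E[e_i\ab_i^T\w]\big)\ \le\ 2\,\E\sup_{\w\in\Tc}\frac1m\sum_{i=1}^m\veps_i\,e_i\,\ab_i^T\w .
\]
To remove the quantization noise, condition on $(\ab_i,\tau_i)_{i\in[m]}$: the $e_i$ are then fixed scalars with $|e_i/\Delta|\le1$, so $t\mapsto(e_i/\Delta)t$ are contractions vanishing at $0$, and the Rademacher contraction principle \cite{VerBook} yields, conditionally and hence in full expectation,
\[
\E\sup_{\w\in\Tc}\frac1m\sum_{i=1}^m\veps_i\,e_i\,\ab_i^T\w\ \le\ \Delta\cdot\E\sup_{\w\in\Tc}\frac1m\sum_{i=1}^m\veps_i\,\ab_i^T\w .
\]
Finally, writing the last supremum as $\frac1{\sqrt m}\,\E\sup_{\w\in\Tc}\langle\frac1{\sqrt m}\sum_i\veps_i\ab_i,\ \w\rangle$, the process $\w\mapsto\frac1{\sqrt m}\sum_i\veps_i\ab_i^T\w$ has increments that are sums of $m$ independent, mean-zero random variables of $\psi_2$-norm at most $L\|\w-\w'\|$ rescaled by $1/\sqrt m$, hence is sub-gaussian in the Euclidean metric with constant $\lesssim L$; by Dudley's inequality / Talagrand's comparison theorem this supremum is $\lesssim L\,\gamma_2(\Tc,\|\cdot\|_2)\asymp L\,\wg(\Tc)$ (this is the ``\empiricalwidth'' bound referred to in the remarks; see \cite{TroppBowling,VerBook}). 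Chaining the three displays gives $\E\sup_{\w\in\Tc}\frac1m\sum_i e_i\ab_i^T\w\le 2\Delta\cdot C\,L\,\wg(\Tc)/\sqrt m$, which is \eqref{eq:uniform_upp2}.

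The only step that is not completely routine is the contraction: the multipliers $e_i$ are \emph{not} independent of the measurement vectors $\ab_i$ (precisely the point stressed in the proof sketch), so one cannot invoke contraction naively. The fix is to condition on $(\ab_i,\tau_i)_{i\in[m]}$ first, which freezes the $e_i$ as uniformly bounded deterministic numbers, and then apply the (conditional) Rademacher contraction principle to the remaining sign randomness; it is exactly the deterministic bound $|e_i|\le\Delta$ coming from the dithered construction that makes this legitimate. Everything else — symmetrization and the generic-chaining bound on the symmetrized empirical width for sub-gaussian vectors — is off-the-shelf.
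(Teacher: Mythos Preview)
Your proposal is correct and follows essentially the same route as the paper: vanish Term~II via the zero-mean identity \eqref{eq:mean_uni}, symmetrize to introduce Rademacher signs, apply the Rademacher contraction principle using the deterministic bound $|e_i|\le\Delta$, and finish with a generic-chaining/majorizing-measure comparison to the Gaussian width. Your explicit remark about conditioning on $(\ab_i,\tau_i)$ before invoking contraction is exactly how the paper applies Proposition~\ref{thm:contraction}, so there is no substantive difference.
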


\begin{proof}
We begin with a standard symmetrization trick \cite[Lem.~6.3]{ledoux} introducing iid Rademacher random variables $\varepsilon_i,~i\in[m]$:
\bea\label{eq:sym_uni}
&\Exp\sup_{\w\in\Tc} \frac{1}{m} \sum_{i=1}^{m} (e_i\mathbf{\ab_i}^T\x_0  - \Exp[ \tilde{e}_i{\tilde{\ab}_i}^T\w ] ) \leq \nonumber \\
&~~~~~~~~~~~~~~~~~~~~~~2\cdot\mathbb{E} \sup_{\w\in\Tc} \frac{1}{m}\sum_{i = 1}^{m} \varepsilon_i  e_i\ab_i^T\w.
\eea
where, recall that $e_i,~i\in[m]$ denotes the quantization noise as in \eqref{eq:q_error}.

Next, we observe that the quantization noise $e_i,~i\in[m]$ is always bounded, i.e., $|e_i|\leq \Delta$. We can exploit this and apply contraction principle to further simplify the expression in the RHS of \eqref{eq:sym_uni}. Specifically, we apply Talagrand's Rademacher contraction principle (see \cite[Eqn.~4.20]{ledoux}; also given as Proposition \ref{thm:contraction} for convenience) with 
$
\phi_i(x) = e_i x,~~i \in [m]
$
and $\Sc = \{\mathbf{t}~:~t_i = \ab_i^T\w~\forall i \in [m]~\text{and}~\w \in \Tc \}$. Note that $\phi_i(0) = 0$ and 
\begin{align}
|\phi_i(x) - \phi_i(x')| &\leq |e_i| \cdot |x - x'| \nonumber \leq \Delta \cdot |x - x'|.
\end{align}
Therefore, we have,
\bea
\label{eq:uniform_upp1}
\mathbb{E} \sup_{\w\in\Tc} \sum_{i = 1}^{m} \varepsilon_i  \, e_i\ab_i^T\w \leq \Delta \cdot \mathbb{E} \sup_{\w\in\Tc} \sum_{i = 1}^{m} \varepsilon_i  \,\ab_i^T\w. 
\eea
 
 Now, we can directly relate the expected supremum on the RHS above with the Gaussian width of the set $\Tc$ thanks to Talagrand's majorizing theorem. Specifically, by sub-gaussianity of the $\ab_i$'s, the random vector $\h:=\sum_{i = 1}^{m} \varepsilon_i  \,\ab_i$ is also sub-gaussian and satisfies $\psin{\h}\leq C\,L\,\sqrt{m}$ (e.g., \cite[Prop.~2.6.1]{VerBook}). Therefore, because of generic chaining \cite[Thm.~1.2.6]{Tal} and the majorizing measure theorem \cite[Thm. 2.1.1]{Tal}:
\bea\label{eq:tal_uni}
\frac{1}{\sqrt{m}}\E \sup_{\w\in\Tc}\h^T\w\leq C\,L\cdot\wg(\Tc).
\eea

Combining \eqref{eq:sym_uni}, \eqref{eq:uniform_upp1}, \eqref{eq:tal_uni}, and the fact that $\E[\mathrm{Term~II}]=0$ we conclude with the desired inequality in \eqref{eq:uniform_upp2}.

\end{proof}

\section{Proofs for Section \ref{sec:one}}\label{sec:proofs_one}

\subsection{Proof of Theorem \ref{thm:one_main}}
We follow the proof strategy as described in Section \ref{sec:sketch}. Specifically, we continue from the key inequality in \eqref{eq:key}; recall the definition of the shorthand notation $\mathrm{LB}$ and $\mathrm{UB}$ for the involved terms. 

First, note that the term $\mathrm{LB}$ does \emph{not} depend on the quantization scheme. Thus, we can use the result of Section \ref{sec:lower}. In particular, it is shown in Lemma \ref{lem:lower} that there exists constants $C_1,c>0$ such that with probability $0.995$ it holds that $\mathrm{LB}\geq C_1\,L$ provided that \eqref{eq:mgeq} holds.

Second, combining the results of Lemmas \ref{lem:upper_two} and \ref{lem:upper_one} we show that 
\bea
&\E[\mathrm{UB}] =\mathrm{Term~I} + \mathrm{Term~II} \nonumber \\
&\leq C_1L^2R\sqrt{\frac{\log{(mn)}}{m}}\cdot\big(\wg(\TcK) + C_2 \big) + C_3L^2R\sqrt{\frac{\log{m}}{m}}\nn\\
&\leq C_4L^2R\sqrt{\frac{\log{(mn)}}{m}}\cdot\big(\wg(\TcK) + C_5 \big).
\eea
Thus, by Markov's inequality the same bound holds for the random variable $\mathrm{UB}$ with probability at least $0.995$.

We can conclude the proof of the theorem by repeating mutatis mutandis the last argument in the proof of Theorem \ref{thm:uni_main} in Section \ref{sec:proof_uni_main}.

\subsection{Upper bound}
In this section we upper bound the two terms in \eqref{eq:Terms} under the setting of Theorem \ref{thm:one_main}.

\subsubsection{Upper Bounding Term II}
The following lemma is only a slight modification of \cite[Cor.~5.2]{dirksen2018robust}
\begin{lem}[Term II -- one-bit case]\label{lem:upper_two}
Suppose that the vector $\ab\in\R^n$ satisfies Assumption \ref{ass:sub} and let $\Tc\subset\Sc^{n-1}$ be an arbitrary subset of the unit sphere. Furthermore, let the measurement $y$ be given as in \eqref{eq:uni_model}. Recall that $\|\x_0\|_2\leq R$. Then, there exist absolute constants $c,C>0$ such that if $T= cLR\sqrt{\log{m}}$, it holds that
\bea
\sup_{\w \in \Tc} \mathbb{E}\big[(T\cdot y  - \ab^T\x_0 ){\ab}_i^T\w\big] \leq CL^2R \sqrt{\frac{{\log{m}}}{{m}}}.
\eea
\end{lem}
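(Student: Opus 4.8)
The plan is to use the dithering identity \eqref{eq:mean_one} to evaluate the conditional bias $\E_\tau[Ty-\ab^T\x_0\mid\ab]$ \emph{exactly}, and then to estimate the resulting ``overload'' term by Cauchy--Schwarz together with a sub-gaussian tail bound. First I would condition on $\ab$: since $y=\sign(\ab^T\x_0+\tau)$ with $\tau\sim\mathrm{Unif}[-T,T]$ independent of $\ab$, and since $\mu_\Qc=T$, identity \eqref{eq:mean_one} with $x=\ab^T\x_0$ gives
\[
\E_\tau\big[Ty-\ab^T\x_0\,\big|\,\ab\big]=-(\ab^T\x_0)\ind{|\ab^T\x_0|>T}+T\ind{\ab^T\x_0>T}-T\ind{\ab^T\x_0<-T},
\]
whose absolute value equals the overload error $(|\ab^T\x_0|-T)\ind{|\ab^T\x_0|>T}$. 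By the tower rule and Cauchy--Schwarz, for every $\w\in\Tc$,
\[
\big|\E[(Ty-\ab^T\x_0)\ab^T\w]\big|\le\sqrt{\E\big[(|\ab^T\x_0|-T)^2\ind{|\ab^T\x_0|>T}\big]}\,\cdot\,\sqrt{\E[(\ab^T\w)^2]}.
\]

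Next I would bound the two factors using only sub-gaussianity. Since $\ab$ is $L$-sub-gaussian and $\w\in\Sc^{n-1}$, the marginal $\ab^T\w$ has $\psin{\ab^T\w}\le L$, hence $\E[(\ab^T\w)^2]\le CL^2$. Likewise $\psin{\ab^T\x_0}\le L\|\x_0\|_2\le LR$, so $\Pro(|\ab^T\x_0|>t)\le 2\exp(-c_0t^2/(LR)^2)$; writing the truncated second moment as the tail integral $\int_0^\infty 2t\,\Pro(|\ab^T\x_0|>T+t)\,dt$ and integrating the exponential bound yields $\E[(|\ab^T\x_0|-T)^2\ind{|\ab^T\x_0|>T}]\le C'(LR)^2\exp(-c_0T^2/(LR)^2)$. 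Now the choice $T=cLR\sqrt{\log m}$ with $c$ large enough that $c_0c^2\ge1$ makes $\exp(-c_0T^2/(LR)^2)\le m^{-c_0c^2}\le 1/m$. Combining,
\[
\big|\E[(Ty-\ab^T\x_0)\ab^T\w]\big|\le\sqrt{\tfrac{C'(LR)^2}{m}}\cdot\sqrt{CL^2}\le \tilde C\,L^2R\sqrt{\tfrac{\log m}{m}},
\]
and since the right-hand side is independent of $\w$, taking $\sup_{\w\in\Tc}$ proves the lemma. Note that, because $\Tc\subset\Sc^{n-1}$, the argument is automatically uniform in $\w$; unlike Term~I, no symmetrization or chaining is needed for the bias term.

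The only substantive step is the first one: the exact conditional-bias identity \eqref{eq:mean_one}, which is the whole point of choosing uniform dither on $[-T,T]$ together with $\mu_\Qc=T$ --- it collapses the bias down to the pure overload contribution, which lives on the rare event $\{|\ab^T\x_0|>T\}$. After that, everything is a routine Cauchy--Schwarz estimate plus a Gaussian-type tail integral, and the only care needed is to pick the constant $c$ in $T=cLR\sqrt{\log m}$ large relative to the sub-gaussian tail constant so that the overload moment is $O((LR)^2/m)$. (Incidentally this argument gives the slightly stronger bound $O(L^2R/\sqrt m)$; the extra $\sqrt{\log m}$ retained in the stated bound is harmless slack, convenient for combining with the bound on Term~I.)
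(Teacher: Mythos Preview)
Your proof is correct and follows essentially the same route as the paper: apply the conditional-bias identity \eqref{eq:mean_one}, then Cauchy--Schwarz, then a sub-gaussian tail estimate on the truncated second moment. The one difference is bookkeeping: you package the three indicator terms into the single overload quantity $(|\ab^T\x_0|-T)\ind{|\ab^T\x_0|>T}$ and bound $\E[(|\ab^T\x_0|-T)^2\ind{\cdot}]$, whereas the paper splits them and bounds $\E[(\ab^T\x_0)^2\ind{\cdot}]$, which carries an extra $T^2$ factor through the integration by parts; this is exactly why you end up with the sharper $O(L^2R/\sqrt{m})$ noted in your last paragraph, while the paper picks up the $\sqrt{\log m}$.
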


\begin{proof} 
For simplicity, let us call $g:=\ab^T\x_0$ and $h:=\ab^T\w$.
Recall from \eqref{eq:mean_one} that
\bea\nn
\E_{\tau} [T \sign(g+\tau) - g] &= -g\ind{|g|>T} + T\ind{g>T} \nonumber \\
&~~~~~-T\ind{g<-T}.
\eea
Using this along with the tower property of expectation and Cauchy-Schwartz inequality, we obtain that
\bea
&\mathbb{E}\big[(T\cdot y  - \ab^T\x_0 ){\ab}^T\w\big] \nonumber \\ 
&~~~~~~~= \E\big[ -gh\ind{|g|>T} + hT\ind{g>T} - hT\ind{g<-T} \big] 
\nn\\ 
&~~~~~~~\leq \sqrt{\E[h^2]} \Big( \sqrt{\E[g^2\ind{|g|>T}]} + \sqrt{\E[T^2\ind{g>T}]}  \nonumber \\
&~~~~~~~~~~~~~~~~~~~~~~~~~~~~+\sqrt{\E[T^2\ind{g<-T}]} ~\Big) \nn\\
&~~~~~~~\leq 3\cdot\sqrt{\E\big[h^2\big]}\cdot\sqrt{\E\big[g^2\ind{|g|>T}\big]}.\label{eq:prev_234}
\eea
By sub-gaussian properties of $\ab$ (see Lemma \ref{lem:sub_marginals}) $(\E[h^2)])^{1/2}\leq L \|\w\|_2$. Also, using integration by parts and subgaussian tails of $\ab^T\x_0$ (again, see Lemma \ref{lem:sub_marginals})  it can be shown (as in \cite[Lem.~5.1]{dirksen2018robust}) that for some absolute constant $c_1>0$ it holds
\bea\label{eq:IBP}
\E g^2\ind{|g|>T} &\leq  T^2\Pro(|g|>T)+ 2\int_{T}^{\infty}t\,\Pro(|g|>t)\mathrm{d}t\nn\\ &
\leq \big(2T^2 + \frac{L^2\cdot\|\x_0\|^2}{c_1}\big)\cdot e^{-\frac{c_1 T^2}{L^2\|\x_0\|^2}}.
\eea
Thus, continuing from \eqref{eq:prev_234} and using the fact that $\Tc\subset\Sc^{n-1}\Rightarrow\sup_{\w\in\Tc}\|\w\|_2=1$, we get
\bea
&\sup_{\w\in\Tc}\mathbb{E}\big[(T\cdot y  - \ab^T\x_0 ){\ab}^T\w\big] \leq \nonumber \\ 
&~~~~~~~~~~~~~~~~~~~~~~~~C_1 L\sqrt{2c_1T^2 + L^2\|\x_0\|^2} \cdot e^{- \frac{c_1T^2}{L^2 \|\x_0\|^2}}.
\eea
To complete the proof, set $$T=\frac{1}{\sqrt{2c_1}}LR\sqrt{\log{m}}$$ and use the fact that $\|\x_0\|_2\leq R$ to find that the exponential term above is upper bounded by $1/\sqrt{m}$ and the rest by $C_1 L^2R(2\sqrt{\log{m}}+1)\leq C L^2R \sqrt{\log{m}}$ for  constants $C_1,C>0$ (we trivially assume that $m\geq2$).
\end{proof}

\subsubsection{Upper Bounding Term I}
The next lemma establishes an upper bound on Term I.
\begin{lem}[Term I -- One-bit case]\label{lem:upper_one} Let $\ab_i, y_i,~i\in[m]$, $\x_0$, and $R$ be as in Theorem \ref{thm:one_main}. Let $e_i,~i\in[m]$, denote the quantization noise as in \eqref{eq:q_error}. Finally, suppose that $\mu= cLR\sqrt{\log{m}}$ for some absolute constant $c>0$. Then, for any subset of the unit sphere $\Tc\subset\Sc^{n-1}$ there exist absolute constants $C_1,C_2>0$ such that it holds that
\bea
\label{eq:one_up1}
&\Exp\sup_{\w\in\Tc} \frac{1}{m} \sum_{i=1}^{m} (e_i\mathbf{\ab_i}^T\x_0  - \Exp[ \tilde{e}_i{\tilde{\ab}_i}^T\w ] ) \leq \nonumber \\
&~~~~~~~C_1 L^2 R \sqrt{\frac{\log{(mn)}}{m}}\cdot\big(\wg(\Tc) + C_2 \big).
\eea
\end{lem}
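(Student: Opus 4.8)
The plan is to follow the template of the uniform case (Lemma~\ref{lem:upper_uni}) — symmetrization, Rademacher contraction, majorizing measure theorem — but to insert a \emph{truncation} step, which is needed because, unlike in the uniform setting, the one-bit quantization noise $e_i = T\,\sign(\ab_i^T\x_0+\tau_i)-\ab_i^T\x_0$ is not bounded; it is only sub-gaussian, with $\|e_i\|_{\psi_2}\lesssim LR\sqrt{\log m}$ (the part $T\,\sign(\cdot)$ is bounded by $T$, while $\ab_i^T\x_0$ has sub-gaussian norm $\le LR$). First I would apply the symmetrization inequality \cite[Lem.~6.3]{ledoux}, exactly as in \eqref{eq:sym_uni}, to bound the left-hand side of \eqref{eq:one_up1} by $\tfrac2m\,\E\sup_{\w\in\Tc}\sum_{i=1}^m\varepsilon_i e_i\ab_i^T\w$ with $\varepsilon_i$ iid Rademacher (the iid structure of $(\ab_i,\tau_i)$ makes the centering term $\E[\tilde e_i\tilde\ab_i^T\w]$ disappear under symmetrization, just as in the uniform proof).

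Next I would truncate the quantization noise at the scale $\bar e := C\,LR\sqrt{\log(mn)}$ for a large absolute constant $C$ (larger than the constant in $T=cLR\sqrt{\log m}$), writing $e_i = e_i' + e_i''$ with $e_i' := e_i\ind{|e_i|\le\bar e}$ and $e_i'' := e_i\ind{|e_i|>\bar e}$, and splitting the supremum accordingly by sub-additivity. For the bounded piece, $|e_i'|\le\bar e$ holds deterministically, so — conditioning on the $\ab_i$'s and $\tau_i$'s and applying Talagrand's contraction principle (Proposition~\ref{thm:contraction}) with $\phi_i(x)=e_i'x$, exactly as in the step producing \eqref{eq:uniform_upp1}, then taking expectations and invoking the majorizing measure bound \eqref{eq:tal_uni} — I obtain $\E\sup_{\w\in\Tc}\tfrac1m\sum_i\varepsilon_i e_i'\ab_i^T\w \lesssim \tfrac{\bar e}{m}\,L\sqrt m\,\wg(\Tc) \lesssim L^2R\sqrt{\tfrac{\log(mn)}{m}}\,\wg(\Tc)$, which is the main term of the claim. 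For the tail piece I can afford to be wasteful: since $\Tc\subset\Sc^{n-1}$, $\sup_{\w\in\Tc}\sum_i\varepsilon_i e_i''\ab_i^T\w\le\|\sum_i\varepsilon_i e_i''\ab_i\|_2$, so (the Rademacher cross-terms vanishing) $\E\sup_{\w\in\Tc}\tfrac1m\sum_i\varepsilon_i e_i''\ab_i^T\w\le\tfrac1m\big(m\,\E[(e_1'')^2\|\ab_1\|_2^2]\big)^{1/2}$; bounding $\E[(e_1'')^2\|\ab_1\|_2^2]$ by Cauchy--Schwarz using $\E\|\ab_1\|_2^4\lesssim L^4n^2$, a finite polynomial moment of the sub-gaussian $e_1$, and the tail estimate $\Pro(|e_1|>\bar e)\le\Pro(|\ab_1^T\x_0|>\bar e - T)\le 2\exp(-c(\bar e-T)^2/(LR)^2)\le 2(mn)^{-c'}$ (with $c'$ as large as we wish by taking $C$ large) shows this term is $O\!\big(L^2R\sqrt{\log(mn)/m}\big)$ — in fact smaller by a factor $1/\sqrt n$ — hence absorbed into the ``$+C_2$''.

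The crux, and the main obstacle, is the calibration of the truncation level $\bar e$: too small and the tail term blows up, too large and the contraction bound degrades; the choice $\bar e\asymp LR\sqrt{\log(mn)}$ balances the two and simultaneously makes $\Pro(|e_1|>\bar e)$ polynomially small in $mn$, and it is precisely this choice that produces the $\sqrt{\log(mn)}$ factor — the extra $\sqrt{\log n}$ relative to Term~II in Lemma~\ref{lem:upper_two}. One further technical point worth flagging is that Assumption~\ref{ass:sub} does not postulate independent coordinates of $\ab$, so the moment bound $\E\|\ab_1\|_2^4\lesssim L^4 n^2$ must be obtained marginal-by-marginal (each coordinate of $\ab_1$ is sub-gaussian with norm $\le L$, so has fourth moment $\lesssim L^4$, and then $\E\|\ab_1\|_2^4=\sum_{j,k}\E[a_{1j}^2a_{1k}^2]\lesssim L^4 n^2$ by Cauchy--Schwarz) rather than via a tensorization argument.
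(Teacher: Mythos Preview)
Your proposal is correct and follows essentially the same route as the paper: symmetrize, truncate, apply the Rademacher contraction principle and the majorizing measure bound on the bounded piece, and control the tail piece crudely via the $\ell_2$-norm. The only cosmetic differences are that the paper truncates on the event $\{|\ab_i^T\x_0|\le\nu\}$ rather than on $\{|e_i|\le\bar e\}$ (equivalent, since $|e_i|\le\mu+|\ab_i^T\x_0|$), and for the tail piece it uses the triangle inequality together with $\E\|\ab_1\|_2^2\lesssim L^2n$ rather than your Rademacher-orthogonality plus fourth-moment argument.
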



\begin{proof}

Exactly as in the proof of Lemma \ref{lem:upper_uni} for the uniform quantization case, we begin with a standard symmetrization trick \cite[Lem.~6.3]{ledoux} introducing iid Rademacher random variables $\varepsilon_i,~i\in[m]$:
\bea\label{eq:sym_one}
&\Exp\sup_{\w\in\Tc} \frac{1}{m} \sum_{i=1}^{m} (e_i\mathbf{\ab_i}^T\x_0  - \Exp[ \tilde{e}_i{\tilde{\ab}_i}^T\w ] ) \leq \nonumber \\
&~~~~~~~~~~~~~~~~~2\cdot\mathbb{E} \sup_{\w\in\Tc} \frac{1}{m}\sum_{i = 1}^{m} \varepsilon_i  e_i\ab_i^T\w.
\eea
Recall that in Lemma \ref{lem:upper_uni} we proceeded by using the fact that in uniform quantization scheme the quantization error $e_i:=\mu\cdot \sign(\ab_i^T\x_0+\tau_i)-\ab_i^T\x_0$ is always a bounded random variable. This allowed us to use the Rademacher contraction principle. Unfortunately, the $e_i$'s are not bounded in one-bit quantization. However, as we will see they can be bounded by a sufficiently large threshold with high-probability. Towards that goal we introduce indicator random variables as follows:
\bea\label{eq:ind_delta}
\delta_i:=\ind{|\ab_i^T\x_0|\leq\nu},\quad \deltac_i:=1-\delta_i,
\eea
where the value of $\nu>0$ is to be specified later in the proof. With these and using the triangle inequality for the supremum metric we write
\bea
&\E \sup_{\w\in\Tc} \frac{1}{m}\sum_{i = 1}^{m} \varepsilon_i  e_i\ab_i^T\w \leq \nn\\
&~~~~\underbrace{\E \sup_{\w\in\Tc} \frac{1}{m}\sum_{i = 1}^{m} \varepsilon_i  \delta_ie_i\ab_i^T\w}_{\mathrm{Term~A}} + \underbrace{\E \sup_{\w\in\Tc} \frac{1}{m}\sum_{i = 1}^{m} \varepsilon_i  \deltac_ie_i\ab_i^T\w}_{\mathrm{Term~B}}.\label{eq:TermsAB}
\eea
We proceed by bounding the two terms above.

\vspace{2pt}
\underline{$\mathrm{Term~A}$}: Note that $|\delta_ie_i|\leq \mu+\nu,~i\in[m].$
Hence, we may bound Term A by first applying Rademacher contraction principle. The details are exactly identical to what follows Eqn. \eqref{eq:sym_uni} in the proof of Lemma \ref{lem:upper_uni}, thus, they are omitted from brevity. Repeating the steps as in \eqref{eq:uniform_upp1} and \eqref{eq:tal_uni} (replacing $\Delta$ wit $\mu+\nu$), we conclude that
\bea
\mathrm{Term~A}&\leq (\mu+\nu)\cdot\Exp\sup_{\w\in\Tc}\frac{1}{m}\sum_{i\in[m]}\varepsilon_i\ab_i^T\w\nn \\  
&\leq C_1\,(\mu+\nu)\,L\cdot\frac{\wg(\Tc)}{\sqrt{m}},\label{eq:TermA}
\eea
for appropriate absolute constant $C_1>0$.

\vspace{2pt}
\underline{$\mathrm{Term~B}$}: We use the following crude bound on the supremum (recall that $\Tc\subset\Sc^{n-1}$) to obtain the following chain of inequalities:
\bea
\mathrm{Term~B}&\leq \E\Big\|\frac{1}{m}\sum_{i = 1}^{m} \varepsilon_i  \deltac_ie_i\ab_i\Big\|_2 \overset{(i)}{\leq} \E\Big[\frac{1}{m}\sum_{i = 1}^{m} \Big\| \varepsilon_i  \deltac_ie_i\ab_i\Big\|_2\Big]\nn\\
&\overset{(ii)}{\leq} \E\big\|\varepsilon_1  \deltac_1 e_1 \ab_1\big\|_2 = \E\big[\, |e_1\deltac_1|\cdot\|\ab_1\|_2 \big]\nn\\
&=\E\big[\, |\deltac_1 (\mu\cdot\sign(g_1+\tau_1)-g_1)|\cdot\|\ab_1\|_2 \big]\nn\\
&\overset{(iii)}{\leq} \mu\cdot\E\big[\, |\deltac_1|\cdot\|\ab_1\|_2 \big] +\E\big[\, |\deltac_1g_1|\cdot\|\ab_1\|_2 \big] \nn\\
&\overset{(iv)}{\leq} \mu\cdot\sqrt{\E\deltac_1}\sqrt{\E\|\ab_1\|_2^2} + \sqrt{\E\deltac_1g_1^2}\sqrt{\E\|\ab_1\|_2^2}\nn\\
&\overset{(v)}{\leq}\ C_2 L (\mu\cdot\sqrt{\E\deltac_1} + \sqrt{\E\deltac_1g_1^2})\cdot \sqrt{n},\label{eq:TermB_1}
\eea
where, we have denoted 
$$
g_1:=\ab_1^T\x_0, 
$$
and,
$(i)$ and $(iii)$ follow from the triangle inequality; $(ii)$ follows by combining the linearity of expectation with the fact that $\varepsilon_i  \deltac_ie_i\ab_i,~i\in[m]$, are identically distributed; $(iv)$ follows from the Cauchy--Schwarz inequality; $(v)$ follows from Lemma \ref{lem:norm_sub}.
 
Continuing, note that  $g_1$ is sub-gaussian with $ 
\|g_1\|_{\psi_2} \leq L\cdot \|\x_0\|_2.$ Therefore,
\bea\label{eq:E1}
\E\deltac_1 = \Pro(|g_1|>\nu)\leq 2e^{-c_1\frac{\nu^2}{L^2\|\x_0\|_2^2}},
\eea
and using integration by parts exactly as in \eqref{eq:IBP}:
\bea\label{eq:E2}
\E g_1^2\deltac_1 
\leq \big(2\nu^2 + \frac{L^2\cdot\|\x_0\|^2}{c_1}\big)\cdot e^{-\frac{c_1 \nu^2}{L^2\|\x_0\|^2}}.
\eea
At this point, choose 
\bea\label{eq:nu}
\nu = \frac{1}{\sqrt{c_1}}LR\sqrt{\log{(mn)}}.
\eea
With that choice, we deduce from \eqref{eq:E1} and \eqref{eq:E2} that
\bea
\E\deltac_1 &\leq \frac{2}{mn},\nn\\
\E g_1^2\deltac_1 &\leq \frac{1}{c_1}L^2R^2\frac{(2\log{(mn)}+1)}{mn}\nn.
\eea
By putting these together in \eqref{eq:TermB_1} and trivially assuming that $mn \geq 2$, we conclude that
\bea
\mathrm{Term~B}\leq C_3L\frac{\mu}{\sqrt{m}} + C_4L^2R\frac{\sqrt{\log{(mn)}}}{\sqrt{m}}.\label{eq:TermB}
\eea

\vspace{2pt} We are now ready to finish the proof of the lemma. Recall the value of $\mu$ in the statement of the lemma and \eqref{eq:nu}. Note that $\mu\leq c'\nu$ for some constant $c'>0$. Hence, putting together \eqref{eq:TermA} and \eqref{eq:TermB} in \eqref{eq:TermsAB} we find that
\bea 
&\E \sup_{\w\in\Tc} \frac{1}{m}\sum_{i = 1}^{m} \varepsilon_i  e_i\ab_i^T\w \leq \nn\\
&\leq C_5 L^2 R \sqrt{\frac{\log{(mn)}}{m}}\,\wg(\Tc) + C_6 L^2R\sqrt{\frac{\log{(mn)}}{m}}\nn\\
&\leq C_7 L^2 R \sqrt{\frac{\log{(mn)}}{m}}\big(\wg(\Tc) + C_8 \big)\nn.
\eea
In view of \eqref{eq:sym_one}, this completes the proof.
\end{proof}

\subsection{Proof of Theorem \ref{thm:one_gauss}}\label{sec:gaussian}

We continue the proof from \eqref{eq:key}. The lower bound follows directly from Gordon's escape through a mesh theorem \cite{gorLem}. Since this is classical, we omit the details for brevity; see for example \cite{plan2016generalized}. Onwards, we focus on the upper-bound term $\mathrm{UB}$. The proof follows the lines of th proof of \cite[Lemma~4.3]{plan2016generalized}, but requires several modifications.

For $i \in [m]$, we decompose $\ab_i$ into two components along the direction $\x_0$ and the space perpendicular to $\x_0$, repsectively, i.e.,
\begin{align}
\label{eq:decom}
\ab_i^T = \ab_i^TP  + \ab_i^TP^{\perp} =  \frac{\ab_i^T\x_0}{\|\x_0\|_2}\cdot \frac{\x^T_0}{\|\x_0\|_2} + \ab_i^TP^{\perp}.
\end{align}
Thus, $\mathrm{UB}$ can be decomposed in the following two terms, which we bound separately.
\begin{align}
\label{eq:terms}
&\underbrace{\frac{1}{m} \sum_{i = 1}^{m}\big(\mu\cdot y_i - \ab_i^T\x_0\big) \frac{\ab_i^T\x_0}{\|\x_0\|_2}\cdot \frac{\x^T_0\w}{\|\x_0\|_2}}_{\rm Term~I} + \nn \\
&~~~~~~~~~~~~~~~~\underbrace{\frac{1}{m} \sum_{i = 1}^{m}\big(\mu\cdot y_i - \ab_i^T\x_0\big)\ab_i^TP^{\perp}\w}_{\rm Term~II}.
\end{align}

\vspace{2pt}
\underline{\rm Term~I:}~Since $\TcK\subset\Sc^{n-1}$, note that 
\begin{align}
\mathbb{E}\sup_{\w \in\TcK} {\rm Term~I}&\leq \mathbb{E}\big|\frac{1}{m}\sum_{i\in [m]}\ksi_i\big| = \mathbb{E}\Big({\big|\frac{1}{m}\sum_{i\in [m]}\ksi_i\big|^2}\Big)^{1/2}\nn\\
&\leq\Big(\mathbb{E}\big|\frac{1}{m}\sum_{i\in [m]}\ksi_i\big|^2\Big)^{1/2}\nn\\
&\leq\frac{1}{\sqrt{m}}\sqrt{\E[\ksi_1^2] + m \E[\ksi_1]^2},\label{eq:TermI_1}
\end{align}
where, we define
\begin{align*}
\zeta_i &:= \frac{\ab_i^T\x_0}{\|\x_0\|_2} \sim \mathcal{N}(0,1), \\
\xi_i &:= \big(\mu\cdot \sign(\|\x_0\|_2\zeta_i + \tau_i) - \|\x_0\|_2\cdot\zeta_i \big)\cdot \zeta_i,
\end{align*}
and where we have used Jensen's inequality in the second line and the fact that $\ksi_i, i\in[m]$ are iid, in the last line.

Using integration by parts it can be shown that
\begin{align}
\mathbb{E}_{\zeta_i} \zeta_i\cdot \sign(\|\x_0\|_2\zeta_i + \tau_i) 
&= \sqrt{\frac{2}{\pi}}e^{-\frac{\tau_i^2}{2\|\x_0\|_2^2}}\nn.
\end{align}
and
\begin{align}
&\mathbb{E}_{\tau_i,\zeta_i}\zeta_i\cdot\sign(\|\x_0\|_2\cdot\zeta_i + \tau_i) = \nonumber \\
&~~~~~~~~~~~~~~~~~~~~\frac{\|\x_0\|_2}{T}\cdot\Big(1 - 2Q\Big(\frac{T}{\|\x_0\|_2}\Big)\Big).\nn
\end{align}
Combining the above two displays yields:
\begin{align}
\label{eq:xi}
\mathbb{E} \xi_i = \|\x_0\|_2\Big(\frac{\mu}{T} - 1\Big) - 2\frac{\mu}{T}Q\Big( \frac{T}{\|\x_0\|_2}\Big),
\end{align}
and
\begin{align}
\label{eq:xi2}
\mathbb{E}\xi_i^2 &= 3\|\x_0\|_2^2 + \mu^2 - 6\|\x_0\|_2^2\frac{\mu}{T} + \nonumber \\
&~~~~12\|\x_0\|_2^2\frac{\mu}{T}Q\Big(\frac{T}{\|\x_0\|_2}\Big) +  2\mu \sqrt{\frac{2}{\pi}}\|\x_0\|_2e^{-\frac{T^2}{2\|\x_0\|_2^2}}.
\end{align}
In particulare, for $\mu = T = R\sqrt{\log m}$, we have from \eqref{eq:xi} that 
\begin{align}
\label{eq:xi11}
\big|\mathbb{E}\xi_i\big| \leq 2\|\x_0\|_2e^{-\frac{R^2\log m}{2\|\x_0\|_2^2}} = 2\frac{\|\x_0\|_2}{\sqrt{m}} \leq \frac{2R}{\sqrt{m}},
\end{align}
and 
\begin{align}
\mathbb{E}\xi_i^2 &\leq R^2\log m - 3\|\x_0\|_2^2 + 12\|\x_0\|_2e^{-\frac{R^2}{2\|\x_0\|_2^2}\log m} \nonumber \\
&~~~~~~~~~~~+2\sqrt{\frac{2}{\pi}}R\|\x_0\|_2\sqrt{\log m} e^{-\frac{R^2}{2\|\x_0\|_2^2}\log m}\nn \\
&\leq R^2\log m + 12 \frac{\|\x_0\|_2}{\sqrt{m}} + 2\sqrt{\frac{2}{\pi}}R^2\sqrt{\frac{\log m}{m}}.\label{eq:xi21}
\end{align}
Now, we can put  \eqref{eq:xi11} and \eqref{eq:xi21} together in \eqref{eq:TermI_1} to find that
\begin{align}
\label{eq:term1-i}
&\mathbb{E}\sup_{\w \in \TcK} {\rm Term~I} \nonumber \\
&\overset{(i)}{\leq }\frac{1}{\sqrt{m}} \sqrt{ R^2\log m + 12 \frac{R}{\sqrt{m}} + 2\sqrt{\frac{2}{\pi}}R^2\sqrt{\frac{\log m}{m}} + 4R^2}.
\end{align} 

\underline{\rm Term~II:}~Note that $\ab_i^T\x_0$ is independent of $\ab_i^TP^\perp\w$ for all $\w$ and all $i\in [m]$ (see also \cite[Lemma 4.3]{plan2016generalized}). Hence,
\begin{align}
\label{eq:term2-i}
&\mathbb{E}\sup_{\w \in \TcK} {\rm Term~II} = \nonumber \\
&~~~\mathbb{E}\sup_{\w \in \TcK}\frac{1}{m} \sum_{i = 1}^{m}\big(\mu\cdot\sign(\widetilde{\ab}_i^T\x_0 + \tau_i) - \widetilde{\ab}_i^T\x_0\big)\ab_i^T\w,
\end{align}
where $\{\widetilde{\ab}_i\}_{i \in [m]}$ are iid random vectors with distribution $\mathcal{N}(0, \Id)$ and, most importantly, independent of the measurement vectors $\{\ab_i\}_{i \in [m]}$. Denoting $$\eta_i := \mu \cdot \sign(\widetilde{\ab}_i^T\x_0+\tau_i) - \widetilde{\ab}_i^T\x_0,$$ it follows from \eqref{eq:term2-i} that 
\begin{align}
\label{eq:term2-ii}
&\mathbb{E}\sup_{\w \in\TcK} {\rm Term~II} = \frac{1}{m} \mathbb{E} \sup_{\w \in \TcK} \sum_{i = 1}^{m} \eta_i \ab_i^T\w \nonumber \\
&~~~~~~~\overset{(i)}{=}\frac{1}{m}\mathbb{E}\Big({\sum_{i=1}^{m}\eta_i^2}\Big)^{1/2} \cdot \wg(\TcK) \leq \big({\mathbb{E}\eta_i^2}\big)^{1/2} \cdot \frac{\wg(\TcK)}{\sqrt{m}},
\end{align}
where $(i)$ follows from the fact that for fixed $(\eta_1,\ldots, \eta_m)$, $\sum_{i = 1}^{m}\eta_i \ab_i$ is random variable with the distribution $\mathcal{N}(0, \sum_{i=1}^{m}\eta_i^2)$. Note that
\begin{align}
\label{eq:Eeta}
\mathbb{E}\eta_1^2 &= \mathbb{E} \big(\mu\cdot\sign(\|\x_0\|_2 \zeta_1 + \tau) - \|\x_0\|_2\zeta_1\big)^2 \nonumber \\
&= \mu^2 + \|\x_0\|_2^2 - 2\frac{\mu}{T}\|\x_0\|_2^2\big(1 - 2Q\big(\frac{T}{\|\x_0\|_2}\big)\big) \nonumber \\
&\overset{(i)}{=}  R^2\cdot \log m + \|\x_0\|_2^2 - 2\|\x_0\|_2^2 + \nonumber \\
&~~~~~~4 Q\Big(\frac{R}{\|\x_0\|_2}\sqrt{\log m}\Big) \cdot \|\x_0\|_2^2 \nonumber \\
&\leq R^2 \cdot \log m + 4R^2 \cdot\frac{1}{\sqrt{m}},
\end{align}
where we have used $\mu = T = R\sqrt{\log m}$ in $(i)$. By combining \eqref{eq:term2-ii} and \eqref{eq:Eeta}, we obtain that 
\begin{align}
\label{eq:term2-iii}
\mathbb{E}\sup_{\w \in \mathcal{K} \cap \mathcal{S}} {\rm Term~II} &\leq 2R\sqrt{\log m + \frac{4}{\sqrt{m}}} \cdot \frac{\wg(\TcK)}{\sqrt{m}}. 
\end{align}

\vspace{2pt}
To continue, we only need to combine \eqref{eq:term1-i} and \eqref{eq:term2-iii}. In fact, letting $m$ large enough, we can find constants $C_1,C_2>0$ such that 
\bea\nn
\eqref{eq:term1-i}\leq C_1R\sqrt{\frac{\log{m}}{m}},~~\eqref{eq:term2-iii}\leq C_2R\sqrt{\frac{\log{m}}{m}}\cdot\wg(\TcK),~~
\eea
Thus,
$$
\E[\mathrm{UB}]\leq C_3 R\sqrt{\frac{\log{m}}{m}}(\wg(\TcK)+C_4).
$$

The proof is now complete by applying Markov's inequality and combining with the lower bound, just as in the proof of Theorem \ref{thm:one_main}.

\section{Auxiliary Facts}

In this section, we gather a few auxiliary results that are used in the proofs in Sections \ref{sec:proofs_uni} and \ref{sec:proofs_one}.

The following result is classical in the theory of dithered quantization; see \cite{gray1993dithered} and references therein. We include here a proof for completeness.
\begin{lem}[Quantization error -- Uniform Quantization]\label{lem:key_uniform}
Let $\tau$ be a random variable distributed according to ${\rm Unif}\big(-\frac{\Delta}{2}, \frac{\Delta}{2}\big]$. Then for a fixed $x \in \R$, we have 
\begin{align}
\label{eq:xtau}
\mathbb{E}_{\tau}\Delta\Big(\Big\lfloor\frac{x + \tau}{\Delta}\Big\rfloor + \frac{1}{2} \Big) = x.
\end{align}
\end{lem}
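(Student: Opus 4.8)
The statement to prove is the classical identity that a mid-riser uniform quantizer with uniform dither in $(-\Delta/2,\Delta/2]$ has zero-mean quantization error, i.e., $\mathbb{E}_\tau \Delta(\lfloor (x+\tau)/\Delta\rfloor + 1/2) = x$ for every fixed $x\in\R$.

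\medskip

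\textbf{Plan.} The plan is to reduce the claim to a statement about the fractional part. First I would normalize by writing $x = \Delta(k + r)$ where $k = \lfloor x/\Delta\rfloor \in \mathbb{Z}$ and $r = \{x/\Delta\} \in [0,1)$, and substitute $u = \tau/\Delta$, so that $u \sim \mathrm{Unif}(-1/2,1/2]$ and the target becomes $\mathbb{E}_u \big[ \lfloor k + r + u \rfloor + 1/2 \big] = k + r$, i.e., $\mathbb{E}_u \lfloor r + u \rfloor = r - 1/2$. Since $k$ is an integer it pulls out of the floor, so everything reduces to the single-period computation $\mathbb{E}_u \lfloor r + u \rfloor = r - \tfrac12$ for $r\in[0,1)$.

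\medskip

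\textbf{Key steps.} For $r \in [0,1)$ and $u \in (-1/2, 1/2]$, the quantity $r + u$ ranges over $(r - 1/2, r + 1/2]$, an interval of length $1$, so $\lfloor r+u\rfloor$ takes only two values. One splits into two cases depending on whether $r < 1/2$ or $r \ge 1/2$ (the boundary $r=1/2$ can go either way). In the case $r < 1/2$: $\lfloor r+u\rfloor = -1$ when $u \in (r-1/2, -r]$ (a set of length $1/2 - r$... wait, length is $-r - (r - 1/2) = 1/2 - 2r$; let me instead just integrate directly). Cleaner: I would simply compute $\mathbb{E}_u \lfloor r+u\rfloor = \int_{-1/2}^{1/2} \lfloor r+t\rfloor\, dt$. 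On $(-1/2, -r]$ we have $r + t \in (r - 1/2, 0]$ so the floor is $-1$ (on $(r-1/2,0)$) — careful with the endpoint $t=-r$ giving value $0$, measure zero. On $(-r, 1/2]$ we have $r+t \in (0, r + 1/2] \subseteq (0,1)$... but if $r + 1/2 = 1$, i.e. $r = 1/2$, the endpoint hits $1$; again measure zero. So $\lfloor r+t\rfloor = 0$ there. Hence the integral is $(-1)\cdot(\text{length of }(-1/2,-r)) + 0 = -(1/2 - r) = r - 1/2$, which is exactly what is needed. This single computation handles all $r \in [0,1)$ (for $r \ge 1/2$ the interval $(-r,1/2]$ still maps into $(0,1)$ with the floor equal to $0$, so no separate case is actually required). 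Adding back $k$ and the constant $1/2$ and multiplying by $\Delta$ recovers $x$.

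\medskip

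\textbf{Main obstacle.} There is no real mathematical obstacle here — the identity is elementary. The only thing to be careful about is the half-open nature of the dither interval and of the floor function at the integer boundaries (the points $t = -r$ and, when $r = 1/2$, $t = 1/2$), but since these are finitely many points they contribute zero to the Lebesgue integral and can be absorbed without comment, or handled by noting the mid-riser convention $\lfloor b \rfloor$ is the largest integer strictly less than $b$ as stated in the paper's footnote. So the write-up is a two-line change of variables followed by a one-line integral.
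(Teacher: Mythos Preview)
Your approach is essentially the paper's own: normalize $\Delta=1$, write $x=k+r$ with $r\in[0,1)$, and compute $\int_{-1/2}^{1/2}\lfloor r+t\rfloor\,dt$ piecewise. The paper does exactly this (with the cosmetic difference that it keeps $x$ and $\lfloor x\rfloor$ rather than introducing $k,r$), and it explicitly splits into the two cases $r<1/2$ and $r\ge 1/2$.

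There is, however, a genuine slip in your write-up. Your parenthetical claim that ``for $r\ge 1/2$ the interval $(-r,1/2]$ still maps into $(0,1)$ with the floor equal to $0$, so no separate case is actually required'' is false. For $r>1/2$ the set $(-1/2,-r]$ is empty, and on the full domain $t\in(-1/2,1/2]$ the quantity $r+t$ ranges over $(r-\tfrac12,\,r+\tfrac12]$, which now reaches into $[1,r+\tfrac12]$; there $\lfloor r+t\rfloor=1$, not $0$. The correct computation for $r\ge 1/2$ is
\[
\int_{-1/2}^{1/2}\lfloor r+t\rfloor\,dt
= 0\cdot\big((1-r)-(-\tfrac12)\big) + 1\cdot\big(\tfrac12-(1-r)\big)
= r-\tfrac12,
\]
which of course gives the same answer but via different pieces. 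So you do need the second case (just as the paper does); your single-case argument as written covers only $r\in[0,1/2]$. Once that is fixed, the proof is complete and matches the paper's.
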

\begin{proof}
Without loss of generality, we assume that $\Delta = 1$. 
Note that depending on whether $0 \leq x - \floor{x} < \frac{1}{2}$ or $\frac{1}{2} \leq x - \floor{x} \leq 1$, we have
\begin{align}
\label{eq:xtau_1}
\floor{x + \tau} = \begin{cases}
\floor{x} - 1 & \text{if}~\tau \in \big(-\frac{1}{2}, \floor{x} - x\big),\\
\floor{x}  & \text{otherwise.}
\end{cases}
\end{align}
or 
\begin{align}
\label{eq:xtau_2}
\floor{x + \tau} = \begin{cases}
\floor{x} + 1 & \text{if}~\tau \in \big(\floor{x} +1 - x, \frac{1}{2}\big],\\
\floor{x}  & \text{otherwise.}
\end{cases}
\end{align}
respectively. We establish \eqref{eq:xtau} assuming that  $0 \leq x - \floor{x} < \frac{1}{2}$. The proof for the remaining case follows using the similar steps. Note that
\begin{align}
\mathbb{E}_{\tau}\Big(\floor{x + \tau}+ \frac{1}{2}\Big) &= \int_{-\frac{1}{2}}^{\frac{1}{2}} \floor{x + \tau} d\tau + \frac{1}{2} \nonumber \\ 
& \overset{(i)}{=}  \int_{-\frac{1}{2}}^{\frac{1}{2}} \floor{x} d\tau - \int_{-\frac{1}{2}}^{\floor{x} - x} d\tau  + \frac{1}{2} \nonumber \\
& = \floor{x} - \big(\floor{x} - x + \frac{1}{2}\big) + \frac{1}{2} = x,
\end{align}
where $(i)$ follows from \eqref{eq:xtau_1}.
\end{proof}

The following facts about sub-gaussian random vectors are also well-know. We collect them here for ease of reference, since they are repeatedly used throughout the proofs.

\begin{lem}[Sub-gaussian marginals]\label{lem:sub_marginals}
Let $\ab\in\R^n$ be an $L$-subgaussian random vector. Then, for all $\x\in\R^n$, the random variable $g:=\ab^T\x$ is sub-gaussian. In particular,
\begin{enumerate}
\item $\psin{g}\leq L\|\x\|_2,$

\item $(\E[g^2])^{1/2}\leq CL\|\x\|_2,$ for some universal constant $C>0$.

\item $\Pro(|g|>t)\leq 2e^{-\frac{ct^2}{L^2\|\x\|_2^2}},$ for all $t>0$ and some universal constant $c>0$.
\end{enumerate}
\end{lem}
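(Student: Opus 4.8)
The plan is to reduce all three claims to the defining property of a sub-gaussian vector (Definition~\ref{def:sub}) by exploiting the homogeneity of the $\psi_2$-norm, and then to invoke the standard scalar equivalences between the $\psi_2$-norm, the second moment, and the sub-gaussian tail decay.

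First I would dispose of the trivial case $\x=\mathbf{0}$, for which $g\equiv 0$ and all three assertions hold vacuously. For $\x\neq\mathbf{0}$, write $g=\ab^T\x=\|\x\|_2\,(\ab^T\ub)$ with $\ub:=\x/\|\x\|_2\in\Sc^{n-1}$. Since $X\mapsto\psin{X}$ is positively homogeneous — immediate from its definition as $\inf\{t>0:\E\exp(X^2/t^2)\le 2\}$, because $\E\exp((cX)^2/t^2)\le 2$ is equivalent to $t\ge|c|\,\psin{X}$ — we obtain $\psin{g}=\|\x\|_2\,\psin{\ab^T\ub}$. By Definition~\ref{def:sub}, $\psin{\ab^T\ub}\le\sup_{\ub'\in\Sc^{n-1}}\psin{\ab^T\ub'}=\psin{\ab}=L$, which proves part~1 and in particular shows that $g$ is sub-gaussian.

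Parts~2 and~3 are then classical consequences of $g$ being sub-gaussian (see e.g.~\cite[Prop.~2.5.2]{VerBook}), which I would either cite directly or derive in two lines. For part~3, apply Markov's inequality to the nonnegative random variable $\exp\!\big(g^2/\psin{g}^2\big)$, whose expectation is at most $2$ by the definition of $\psin{g}$: for every $t>0$, $\Pro(|g|>t)=\Pro\!\big(\exp(g^2/\psin{g}^2)>\exp(t^2/\psin{g}^2)\big)\le 2\exp(-t^2/\psin{g}^2)\le 2\exp\!\big(-t^2/(L^2\|\x\|_2^2)\big)$, using $\psin{g}\le L\|\x\|_2$; this gives part~3 with $c=1$. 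For part~2, integrate this tail bound: $\E[g^2]=\int_0^\infty\Pro(g^2>s)\,ds\le\int_0^\infty 2e^{-s/\psin{g}^2}\,ds=2\,\psin{g}^2$, so $(\E[g^2])^{1/2}\le\sqrt{2}\,\psin{g}\le\sqrt{2}\,L\|\x\|_2$, i.e.~$C=\sqrt{2}$.

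I do not anticipate any real obstacle: the entire argument is routine manipulation of the $\psi_2$-norm. The only point requiring slight care is that the constants in parts~2 and~3 must be universal (independent of $n$ and $\x$), which is automatic here since every bound is controlled purely through $\psin{g}$, and $\psin{g}\le L\|\x\|_2$ leaves no residual dependence on the ambient dimension or on the particular vector $\x$.
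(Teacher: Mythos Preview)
Your proof is correct and follows essentially the same approach as the paper: both use homogeneity of the $\psi_2$-norm to reduce part~1 to the unit-sphere definition of $\psin{\ab}$, and both obtain parts~2 and~3 from the standard equivalent characterizations of sub-gaussian random variables (the paper simply cites \cite[Sec.~2.5.2]{VerBook}, whereas you spell out the Markov/tail-integration arguments with explicit constants).
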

\begin{proof}
The first statement follows easily by Definition \ref{def:sub}: 
\bea\|g\|_{\psi_2} = \|\x\|_2  \|\ab_i^T\frac{\x}{\|\x\|}\|_{\psi_2} 
\leq \|\x\|_2 \sup_{\|\ub\|_2\leq 1} \psin{\ab_i^T\ub} = \|\x\| L.\nn
\eea
The other two statements are then immediate by the standard equivalent properties of sub-gaussians, e.g.\cite[Sec.~2.5.2]{VerBook}.\end{proof}

\begin{lem}[Norm of sub-gaussian vector]\label{lem:norm_sub}
For an $L$-subgaussian random vector $\ab\in\R^n$ it holds
$$
\E\|\ab\|_2^2 \leq CL^2n,
$$
for some absolute constant $C>0$.
\end{lem}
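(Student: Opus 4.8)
The plan is to expand the squared Euclidean norm along the standard coordinate basis and control each coordinate as a one‑dimensional sub‑gaussian marginal. Write $\ab = (a_1,\dots,a_n)$ with $a_j = \ab^T\e_j$, where $\e_j$ denotes the $j$-th standard basis vector of $\R^n$. Then $\|\ab\|_2^2 = \sum_{j=1}^n a_j^2$, and by linearity of expectation $\E\|\ab\|_2^2 = \sum_{j=1}^n \E[a_j^2]$.

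Next I would invoke Definition \ref{def:sub}: since each $\e_j\in\Sc^{n-1}$, the marginal $a_j=\ab^T\e_j$ is sub‑gaussian with $\psin{a_j}\leq \psin{\ab}=L$. By the standard equivalence of sub‑gaussian properties for scalar random variables (e.g.\ \cite[Sec.~2.5.2]{VerBook}, or exactly as invoked in Lemma \ref{lem:sub_marginals}), this implies $\E[a_j^2]\leq C L^2$ for a universal constant $C>0$. Substituting into the sum gives $\E\|\ab\|_2^2 \leq \sum_{j=1}^n C L^2 = C L^2 n$, which is the claimed bound.

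There is essentially no obstacle here; the only thing to be careful about is that the constant $C$ is genuinely universal (independent of $n$ and the distribution of $\ab$), which is guaranteed by the scalar sub‑gaussian moment equivalence being applied uniformly to each coordinate. Alternatively, one could avoid the coordinate decomposition entirely by noting that $\|\ab\|_2 = \sup_{\ub\in\Sc^{n-1}}\ab^T\ub$ and using a net/chaining bound, but that is unnecessary overhead; the coordinatewise argument is the cleanest route to the stated $O(L^2 n)$ estimate.
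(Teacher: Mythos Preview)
Your proposal is correct and matches the paper's proof essentially line for line: both expand $\E\|\ab\|_2^2=\sum_{j}\E[a_j^2]$ along the standard basis, bound each coordinate via $\psin{a_j}\leq L$ (invoking Lemma~\ref{lem:sub_marginals}), and sum the resulting $CL^2$ terms. There is nothing to add.
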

\begin{proof}
The statement is a result of the following chain of inequalities:
\bea
\E\|\ab\|_2^2 = \E\sum_{i\in[n]}a_i^2 \leq C \sum_{i\in[n]} \psin{a_i}^2 \leq CL^2n,
\eea
where we applied Lemma \ref{lem:sub_marginals} on the entries of $\ab$ denoted as $a_i:=\ab_i^T\e_i,~i\in[n]$, where $\e_i$ is the $i^{\text{th}}$ standard basis vector.
\end{proof}

Finally, throughout our proofs we use the Rademacher comparison principle in the following form.

\begin{propo}[Rademacher comparison principle; Eqn.~(4.20) in \cite{ledoux}]\label{thm:contraction}
Let $f~:~\R \to \R$ be a convex and increasing function. For $i \in [m]$, let $\phi_i : \R \to \R$ be a Lipschitz function with Lipschitz constant $\rho$, i.e., 
$
| \phi_i(x) - \phi_i(x')| \leq \rho |x - x'|,
$
such that $\phi_i(0) = 0$. Then, for any $\Sc \subseteq \R^m$, we have 
\begin{align}
\mathbb{E}f\Big(\sup_{\mathbf{t} = (t_1,\ldots, t_m) \in \Sc} \sum_{i = 1}^{m} \varepsilon_i \phi_i(t_i)\Big)  \leq \mathbb{E} f \Big( \rho \cdot \sup_{\mathbf{t} \in \Sc} \sum_{i = 1}^{m}\varepsilon_i t_i \Big),
\end{align}
where $\{\varepsilon_i\}_{i \in [m]}$ denote $m$ i.i.d. Rademacher random variables.
\end{propo}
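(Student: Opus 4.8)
The plan is to reduce to the unit-Lipschitz case, prove a single-coordinate comparison lemma, and then peel off the coordinates one at a time. Throughout I assume the suprema involved are finite (e.g.\ $\Sc$ bounded), which is the only regime in which the Proposition is used. \emph{Normalisation.} Replacing each $\phi_i$ by $\psi_i:=\phi_i/\rho$, which is $1$-Lipschitz with $\psi_i(0)=0$, and $f$ by $\widetilde f(x):=f(\rho x)$, which is again convex and increasing, the claimed inequality becomes $\E\,\widetilde f\big(\sup_{\mathbf{t}\in\Sc}\sum_{i}\varepsilon_i\psi_i(t_i)\big)\le \E\,\widetilde f\big(\sup_{\mathbf{t}\in\Sc}\sum_{i}\varepsilon_i t_i\big)$, so it suffices to prove the statement when every $\phi_i$ is a contraction (Lipschitz constant $1$).

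\emph{One-coordinate lemma.} The core step is to show that for any fixed function $g:\Sc\to\R$, any contraction $h:\R\to\R$ with $h(0)=0$, and a single Rademacher $\varepsilon$, one has $\E_\varepsilon\, f\big(\sup_{\mathbf{t}\in\Sc}[g(\mathbf{t})+\varepsilon\,h(t_m)]\big)\le \E_\varepsilon\, f\big(\sup_{\mathbf{t}\in\Sc}[g(\mathbf{t})+\varepsilon\,t_m]\big)$. Averaging over $\varepsilon\in\{\pm1\}$ and writing $A^{\pm}:=\sup_{\mathbf{t}}[g(\mathbf{t})\pm h(t_m)]$ and $B^{\pm}:=\sup_{\mathbf{t}}[g(\mathbf{t})\pm t_m]$, this reduces to $f(A^+)+f(A^-)\le f(B^+)+f(B^-)$, which I would establish from two elementary facts. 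First, $\max(A^+,A^-)\le\max(B^+,B^-)$, since for every $\mathbf{t}$ we have $g(\mathbf{t})\pm h(t_m)\le g(\mathbf{t})+|h(t_m)|\le g(\mathbf{t})+|t_m|=\max\{g(\mathbf{t})+t_m,\,g(\mathbf{t})-t_m\}$. Second, $A^++A^-\le B^++B^-$: picking, up to an arbitrarily small error that is then let to vanish, points $\mathbf{s},\mathbf{u}\in\Sc$ with $g(\mathbf{s})+h(s_m)\approx A^+$ and $g(\mathbf{u})-h(u_m)\approx A^-$, and using $h(s_m)-h(u_m)\le|s_m-u_m|$, the quantity $g(\mathbf{s})+g(\mathbf{u})+\big(h(s_m)-h(u_m)\big)$ is at most $(g(\mathbf{s})+s_m)+(g(\mathbf{u})-u_m)$ when $s_m\ge u_m$ and at most $(g(\mathbf{s})-s_m)+(g(\mathbf{u})+u_m)$ otherwise, both of which are $\le B^++B^-$. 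Finally, combining these with the convexity and monotonicity of $f$: writing $S:=B^++B^-$ and $M:=\max(B^+,B^-)$ we have $A^{\pm}\le M$ and $A^++A^-\le S$, and since $x\mapsto f(x)+f(S-x)$ is nondecreasing on $[S/2,\infty)$, if (say) $A^+\ge S/2$ then $f(A^+)+f(A^-)\le f(A^+)+f(S-A^+)\le f(M)+f(S-M)=f(B^+)+f(B^-)$, whereas if $A^+,A^-<S/2$ then $f(A^+)+f(A^-)\le 2f(S/2)\le f(B^+)+f(B^-)$ by Jensen.

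\emph{Iteration.} Condition on $\varepsilon_1,\dots,\varepsilon_{m-1}$ and apply the one-coordinate lemma with $g(\mathbf{t})=\sum_{i<m}\varepsilon_i\phi_i(t_i)$ and $h=\phi_m$; taking the outer expectation replaces $\phi_m(t_m)$ by $t_m$. Repeating this for $i=m-1,\dots,1$, at the $i$-th step conditioning on every Rademacher except $\varepsilon_i$ and taking $g(\mathbf{t})=\sum_{j<i}\varepsilon_j\phi_j(t_j)+\sum_{j>i}\varepsilon_j t_j$, replaces every $\phi_i$ by the identity and gives the claim.

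The step I expect to be the main obstacle is the scalar deduction of $f(A^+)+f(A^-)\le f(B^+)+f(B^-)$ from $\max(A^+,A^-)\le\max(B^+,B^-)$ and $A^++A^-\le B^++B^-$: neither monotonicity nor convexity of $f$ alone is enough, so one must combine them with care (equivalently, one is maximising a convex function over an unbounded polyhedron and has to verify that the recession directions cannot increase it). Setting up the near-maximisers and the sign case analysis in the second fact also needs a little attention; the rescaling and the inductive bookkeeping are routine.
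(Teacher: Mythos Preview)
The paper does not give its own proof of this proposition: it is stated as an auxiliary fact with a citation to Ledoux--Talagrand (Eqn.~(4.20) there), and is used as a black box in the proofs of Lemmas~\ref{lem:upper_uni} and~\ref{lem:upper_one}. Your argument is essentially the classical Ledoux--Talagrand proof of the contraction principle---normalise to unit Lipschitz constant, prove a one-coordinate comparison, then peel off coordinates by conditioning---and it is correct. In particular, your handling of the scalar step, deducing $f(A^+)+f(A^-)\le f(B^+)+f(B^-)$ from $\max(A^+,A^-)\le\max(B^+,B^-)$ and $A^++A^-\le B^++B^-$ via the monotonicity of $x\mapsto f(x)+f(S-x)$ on $[S/2,\infty)$, is exactly the right combination of convexity and monotonicity, and the case split is complete.
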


\end{document}